\newtheorem{theorem}{Theorem}
\newtheorem{proposition}[theorem]{Proposition}
\newtheorem{corollary}[theorem]{Corollary}
\newtheorem{claim}[theorem]{Claim}
\newcommand{\FIXME}[1]{}
\newcommand{\GE}{{\cal G}}
\newcommand{\out}{32 \log(n)}
\newcommand{\FS}{{\cal F_S}}
\newcommand{\WS}{{\cal W_S}}
\newcommand{\nneigh}{2}
\newcommand{\loss}{\frac{512\log(n)}{g(1) n}}
\newcommand{\thresh}{\tau}
\newcommand{\kp}{k}
\title{An Approximate ``Law of One Price'' in Random Assignment Games}
\author{Avinatan Hassidim\thanks{Bar-Ilan University, Department of Computer Science} \and Assaf Romm\thanks{Harvard University, Department of Economics and HBS}}
\begin{document}

\maketitle

\begin{abstract}
Assignment games represent a tractable yet versatile model of two-sided markets with transfers. We study the likely properties of the core of randomly generated assignment games. If the joint productivities of every firm and worker are i.i.d bounded random variables, then with high probability all workers are paid roughly equal wages, and all firms make similar profits. This implies that core allocations vary significantly in balanced markets, but that there is core convergence in even slightly unbalanced markets. For the benchmark case of uniform distribution, we provide a tight bound for the workers' share of the surplus under the firm-optimal core allocation. We present simulation results suggesting that the phenomena analyzed appear even in medium-sized markets. Finally, we briefly discuss the effects of unbounded distributions and the ways in which they may affect wage dispersion.
\end{abstract}

\section{Introduction} \label{sec:introduction}

\FIXME{Explain the expression whp somewhere}

The ``law of one price'' asserts that homogeneous goods must sell for the same price across locations and vendors. In labor markets, it implies that workers who are equally skilled should earn the same salary. This paper makes the formal claim that even in the presence of some heterogeneity across firms and workers, an approximate law remains valid.\footnote{We choose to focus on labor markets as our main example. However, our results are also applicable to markets with heterogeneous commodities and unit demand buyers.} While the law of one price does not sit well with empirical evidence, our model provides a useful benchmark for testing hypotheses on possible sources of wage inequality. Furthermore, the analysis elucidates how surplus related to idiosyncratic compatibility is divided among market participants.

Our basic building block is the assignment game model of \cite{ss1971}. Each firm from a finite set of firms $F$ is looking to hire exactly one worker from a finite set of workers $W$, in exchange for a negotiable salary. Each firm has a (possibly different) value for hiring each of the workers, and each worker has a (possibly different) reservation value for working for each of the firms. Transfers are permitted between any two parties (not just a transfer from a firm to a worker employed by it), and utility is assumed to be linear in money. Note that since transfers are freely allowed, we can describe the net productivity of each firm-worker pair by a single number.

We bolster the model by assuming that each element in the matrix of productivities is independently and identically distributed on a bounded interval. This assumption is similar in spirit to the one taken in many auction theory papers, in which bidders' valuations are assumed to be heterogeneous and determined according to some random distribution. However, unlike most of the literature on auction theory, we do not wish to study the effects of the random generation on agents' beliefs and equilibrium behavior. Instead we take a different approach and characterize the likely outcomes in a typical complete information matching market created in that manner.

The main result of this paper is that under certain regularity conditions and with high probability, in any core allocation workers are being paid approximately identical salaries, and firms receive approximately identical profits. Because of inherent heterogeneity in firms' and workers' preferences, the law of one price only holds approximately, with some workers being paid more than their peers. Nevertheless, the differences become negligible as the market becomes large, and there is no wage dispersion in a reasonably-sized random assignment game. Formally, we show that with high probability the core of an assignment game generated by i.i.d draws from a bounded random distribution with continuous and positive density is narrow, in the sense that the difference between the payoffs of agents on the same side of the market are $O\left(\frac{\log^2(n)}{n}\right)$.

The approximate law of one price allows a deeper understanding of surplus division between the two sides of the market. It was shown already by \cite{ss1971} that there are core allocations which are optimal for the firms, and core allocations that are optimal for the workers. In a market with different number of firms and workers it is immediately implied that agents on the long side will get a vanishing share of the total surplus. However, in a balanced market the choice of core allocation can have compelling consequences for all agents in the market. This extends the basic economic intuition on competition in a market for homogeneous goods.

It is worthwhile to mention that the model can also be used to describe  auctions of heterogeneous items with unit-demand bidders \cite{dgs1986}. With this interpretation in mind, core allocations are equivalent to Walrasian equilibira. Our results therefore provide insight into revenue acquired by sellers under different market conditions.

From a slightly more broad point of view, this paper belongs to the theoretical literature on matching in two-sided markets. This literature gained prominence in the 1960's and early 1970's following the publication of the seminal papers by \cite{gs1962} and \cite{ss1971}, yet research remained mostly divided (with some notable exceptions) into two parallel strands: with and without transferable utility (i.e.\ money). During the 1980's, as it became clear that real-life centralized clearing houses can be immensely improved using intuitions gained in the study of marriage markets, the transferable utility strand of the literature became slightly neglected compared to its glorified non-transferable utility half-sibling. It is our belief that assignment games provide an excellent way to model decentralized markets, and that both strands of the matching theory literature can benefit from the continuous cross-fertilization.

The rest of the paper is organized as follows. \autoref{sec:related_work} reviews the literature related to our paper. \autoref{sec:model} introduces the model and the formal notation. \autoref{sec:approx_lop} contains the statement and proof of the main result, and \autoref{sec:applications} applies the result to analyze the likely properties of balanced and unbalanced markets. \autoref{sec:simulations} provides some simulation results and a discussion on unbounded distribution, and \autoref{sec:conclusion} concludes.

\section{Related Work} \label{sec:related_work}

Assignment games were introduced by \cite{shapley1955} and thoroughly analyzed by \cite{ss1971}. A (non-comprehensive) list of further work on assignment games includes the study of strategic incentives \cite{dg1985}, entry \cite{mo1988}, convergence via decentralized processes \cite{cfy2012,kp2012,npy2013}, elongation of the core \cite{quint1987} and its dimensions \cite{nr2008}, and median stable matchings \cite{sy2011}.

Analysis of random instances of the linear sum assignment problem was first carried by \cite{walkup1979}, and was subsequently improved by \cite{aldous2001,cs1999,karp1987}. For a survey on the topic and other related literature see \cite{kp2009assignment}.

While to the best of our knowledge we are the first to introduce random values into the assignment game framework, some of the consequences we present in \autoref{sec:applications} resemble work done on random marriage markets. Pioneered by \cite{wilson1972} and \cite{knuth1976}, and extensively developed by \cite{pittel1989,pittel1992}, the analysis of random preference marriage markets suggests that in a situation in which the number of men is equal to the number of women, with high probability the proposing side's (in a deferred acceptance algorithm) mean rank of partners behaves like $\log(n)$, where as the other side's mean rank of partners behaves like $\frac{n}{\log(n)}$. Recently, \cite{akl2013} have shown that in unbalanced random marriage markets with high probability under any stable matching the short side's mean rank of partners behaves like $\log(n)$, whereas the long side's mean rank of partners behaves like $\frac{n}{\log(n)}$. \cite{cgs2014} and \cite{cs2014} utilize those results to study aspects of strategic behavior in marriage markets with incomplete information. The papers most closely related to the current paper are \cite{lee2014} and \cite{ly2014} that assume that preferences are derived from underlying cardinal utilities and study the issues of core convergence and efficiency, respectively. Finally, a different type of results on random marriage markets (and extensions thereof) are related to showing core convergence when preferences are bounded in length \cite{abh2011,im2005,kp2009,kpr2013}.\footnote{Similar ideas were also applied by \cite{manea2009}, \cite{ck2010} and \cite{km2010}.}

\section{Model and Notation} \label{sec:model}

Consider a sequence of markets $\{M^n\}_{n=1}^\infty$, such that each market can be described as $M^n = \left(F^n, W^n, \alpha^n\right)$, where $F^n$ is a set of firms of size $n$, $W^n$ is a set of workers of size $n + k(n)$, with $k(n) \geq 0$, and $\alpha^n$ being an $|F^n| \times |W^n|$ real matrix representing the value of pairs of firms and workers. We assume throughout each element of $\alpha^n$ is distributed i.i.d according to the cdf $G$ which is bounded on the interval $[0,1]$ (meaning $G(1) = 1$) and has a pdf $g$ which is continuous and strictly positive\footnote{In fact for our our results to hold we just need that the pdf is continuous near its supremum, and that it doesn't depend on $n$. In addition, the results also hold if there are atoms. Specifically, if there is an atom at the supremum of the distribution then whp the surplus will be $n$ times the supremum, there will be no wage dispersion, and in a firm-optimal allocation all workers will get zero. Atoms at other places do not matter as $n$ goes to infinity.}.

In market $M^n$, the value of a coalition of firms and workers $S$ is given by
\[ \mathrm{v}(S) = \max \left[ \alpha^n_{i_1 j_1} + \alpha^n_{i_2 j_2} + \dots + \alpha^n_{i_l j_l} \right], \]
where the maximum is taken over all arrangements of $2l$ distinct agents, $i_1,\dots,i_l \in S \cap F$, $j_1,\dots,j_l \in S \cap W$, and $l \leq \min \left\{ \left|S \cap F\right|, \left|S \cap W \right| \right\}$. An \textbf{allocation} is denoted by $(\mu,u,v)$ with $\mu$ being a matching of firms to workers and vice-versa, and $u$ and $v$ being payoff vectors for the firms and workers respectively. We will refer to $u$ as firms' ``profits'', and to $v$ as workers' ``salaries''. Formally, $\mu \colon F^n \cup W^n \rightarrow F^n \cup W^n \cup \{\emptyset\}$, and satisfies
\begin{enumerate}
\item $\forall f \in F^n: \mu(f) \in W^n \cup \{\emptyset\}$,
\item $\forall w \in W^n: \mu(w) \in F^n \cup \{\emptyset\}$, and
\item $\forall f \in F^n, w \in W^n: \mu(f) = w \iff \mu(w) = f$.
\end{enumerate}
An allocation is a \textbf{core allocation} if no coalition can deviate and split the resulting value between its members such that each member of the coalition becomes strictly better off. We denote the set of core allocations of $M^n$ by $C\left(M^n\right)$. \cite{ss1971} have shown that the core is a non-empty compact and convex set, and that it is elongated in the sense that there is a firm-optimal core allocation in which salaries are the lowest, and a worker-optimal core allocation in which salaries are the highest.

\section{An approximate law of one price} \label{sec:approx_lop}

Our main theorem states that the core is not only elongated, but also narrow, in the sense that with high probability (i.e.\ with probability that approaches 1) firms' profits and workers' salaries do not vary a lot.

\begin{theorem} \label{th:approx_lop}
There exists $c \in \mathbb{R}_+$ such that with probability at least $1-O\left(\frac{1}{n}\right)$, for any $(\mu,u,v) \in C\left(M^n\right)$ we have
\begin{enumerate}
\item $\forall i,j \in \left\{1,\dots\left|F^n\right|\right\}: \left| u_i - u_j \right| \leq \frac{c \log^2 n}{n}$, and
\item $\forall i,j \in \left\{1,\dots\left|W^n\right|\right\}: \left| v_i - v_j \right| \leq \frac{c \log^2 n}{n}$.
\end{enumerate}

\end{theorem}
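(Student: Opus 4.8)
My plan is to fix one optimal matching $\mu$ and argue entirely from the core inequalities $u_i + v_w \ge \alpha_{iw}$ together with the equalities $u_i + v_{\mu(i)} = \alpha_{i\mu(i)}$ on matched pairs. Since every core allocation is a dual optimum and hence satisfies complementary slackness on $\mu$'s edges, any bound obtained this way will hold uniformly over all of $C(M^n)$ with the same $\mu$ and the same random matrix, which is what lets me bound the spread \emph{within} each allocation rather than across the lattice. Subtracting two constraints gives the one-step comparisons $u_i - u_j \le \alpha_{i\mu(i)} - \alpha_{j\mu(i)}$ and $v_w - v_{w'} \le \alpha_{\mu(w)w} - \alpha_{\mu(w)w'}$. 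A single such comparison is useless because its right-hand side can be close to $1$; it only becomes powerful when the two agents share a near-top entry. I will therefore pass to potentials $\phi(i) = u_i$ and $\psi(w) = 1 - v_w$, so that a matched edge gives the two-sided bound $|\phi(i) - \psi(w)| = 1 - \alpha_{iw}$, while any off-diagonal entry with $\alpha_{iw} \ge 1 - \delta$ gives the one-sided bound $\psi(w) \le \phi(i) + \delta$.

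The two quantitative inputs I need are order-statistic facts that follow from $g$ being continuous and strictly positive near $1$. First, with probability $1 - O(1/n)$ every row and every column has $\Theta(\log n)$ entries exceeding $1 - c\log n/n$; this is a Chernoff-plus-union-bound estimate using $g(1) > 0$. Second, whp the maximum-weight matching uses only near-top entries, i.e.\ $\alpha_{i\mu(i)} \ge 1 - c\log n/n$ for every matched pair, which is the statement that the largest cost edge in the optimal assignment of the complementary costs $1 - \alpha_{iw}$ is $O(\log n/n)$. Setting $\delta = c\log n/n$, matched edges then contribute at most $\delta$ to a potential difference, and there are $\Theta(\log n)$ near-top edges at every firm and worker.

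The heart of the argument is a chaining step. Consider the alternating walk that leaves a firm $i$ along its matched edge to $\mu(i)$ and then follows a near-top entry of column $\mu(i)$ to a new firm $i'$; composing $\psi(\mu(i)) \ge \phi(i) - \delta$ with $\phi(i') \ge \psi(\mu(i)) - \delta$ yields $\phi(i') \ge \phi(i) - 2\delta$, so $\phi$ drops by at most $2\delta$ per step. If such alternating walks connect any two firms in $O(\log n)$ steps, then $\max_i \phi(i) - \min_i \phi(i) \le O(\log n)\cdot 2\delta = O(\log^2 n/n)$, which is exactly the firms' bound since $\phi = u$. The required connectivity follows from the first order-statistic fact: each step offers $\Theta(\log n)$ choices of next firm, so the reachability digraph behaves like a random digraph of out-degree $\Theta(\log n)$ and is whp strongly connected with diameter $O(\log n)$. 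Running the mirror-image walk on the worker side bounds $\psi$, hence the salaries; unmatched agents on the long side (payoff $0$) are folded in by using their own near-top entries to force the adjacent potentials down to the common level, so they cannot escape the bound.

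The main obstacle is the dependence between $\mu$ and the near-top structure of $\alpha$: the reachability digraph is defined through $\mu$, which is a complicated function of the entire matrix, so its edges are not independent and a union bound over the $n!$ matchings is hopeless. I expect to resolve this by proving the connectivity statement for the matching-free bipartite graph of near-top entries — showing it is a good expander whp, uniformly in the realized matrix — and then observing that adjoining any perfect matching can only shorten distances, so the alternating-walk diameter is controlled for the optimal $\mu$ without ever conditioning on it. Making the alternation rigorous (guaranteeing each near-top edge is traversed in the favorable direction, and that expansion survives the restriction to alternating paths), along with verifying the ``matched edges are near-top'' lemma via the known $O(\log n/n)$ bound on the largest edge of a random optimal assignment, is where the real work lies.
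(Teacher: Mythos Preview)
Your strategy is essentially the paper's: build the bipartite graph of near-top entries, prove (independently of $\mu$) that it is an expander, and chain core inequalities along alternating near-top/matching paths of length $O(\log n)$, each step costing $O(\log n/n)$. The paper traverses firm $\to$ row-near-top worker $\to$ that worker's employer; you go firm $\to$ matched worker $\to$ column-near-top firm, which is the dual walk and works identically. Your proposed resolution of the $\mu$-dependence---prove expansion for the near-top graph alone, then observe that any bijection $\mu$ preserves set sizes so the alternating walk inherits the expansion---is precisely how the paper handles it (its Claims~2 and~3 are expander statements about the near-top graph only, with $\mu$ entering afterwards as an arbitrary injection).

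One simplification is worth flagging. You list as a required input that every matched edge is near-top, $\alpha_{i\mu(i)} \ge 1 - c\log n/n$, appealing to the random-assignment literature. The paper does not need this: on the matched step it uses only the trivial bound $\alpha_{i',j} \le 1$, which already gives $v_j \le 1 - u_{i'}$, i.e.\ $\psi(\mu(i)) \ge \phi(i)$ in your notation, with zero loss. If you inspect your own chaining you will see that only this direction of the matched-edge bound is ever invoked, so the external lemma can be dropped and the argument becomes self-contained.
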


Let $\GE$ be a directed bipartite graph where firms are vertices of one side, and workers are vertices on the other side. Each firm $f$ points to the $32 \frac{n + \kp(n)}{n} \log(n)$ workers it values the most, equivalently there is an edge $(f,w)$ if and only if
\[\left|\left\{\alpha_{f,j} : \alpha_{f,j} \ge \alpha_{f,w}\right\}\right| \le 32 \frac{n + \kp(n)}{n} \log(n).\]
\noindent Note that since there are no ties in $\alpha$ the verbal definition coincides with the mathematical one.

Each worker that is assigned points to the firm which hired her.

\begin{claim} \label{expander}
With probability at least $1 - 1/n$, for any set of firms $\FS$ with $|\FS| < n/10$, we have that
\[\left|N(\FS)\right| \ge 2\frac{n + \kp(n)}{n}  |\FS|,\]
where $N(S)$ is the set of neighbors of a set of vertices $S$. In addition, for any set of firms $|\FS|$ of size $n/10$
\[\left|N(\FS)\right| \ge 0.99(n + \kp(n)).\]
\end{claim}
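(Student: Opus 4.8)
The plan is to treat $\GE$ as a random bipartite graph and prove the two expansion bounds by a first-moment (union-bound) argument over all candidate small neighborhoods. The starting observation is structural: since the entries of $\alpha^n$ are i.i.d.\ draws from a continuous density, for each firm $f$ the ranking of the $n+\kp(n)$ workers by value is a uniformly random permutation, so the set of top-$d$ workers that $f$ points to is a uniformly random $d$-subset of the workers, where $d = 32\frac{n+\kp(n)}{n}\log(n)$ is the common out-degree. Moreover, because distinct rows of $\alpha^n$ are independent, these $d$-subsets are mutually independent across the $n$ firms. This reduces the claim to a statement about $n$ independent uniform random $d$-subsets of an $(n+\kp(n))$-element ground set.

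Before invoking randomness I would dispose of small sets deterministically. For any nonempty $\FS$ and any single $f \in \FS$ we have $|N(\FS)| \ge |N(\{f\})| = d = 32\frac{n+\kp(n)}{n}\log(n)$, which already exceeds $2\frac{n+\kp(n)}{n}|\FS|$ whenever $|\FS| \le 16\log(n)$. So the first bound holds with certainty in this range, and the probabilistic work is confined to $16\log(n) < |\FS| < n/10$ (together with the boundary size $n/10$). A pleasant by-product is that in this surviving range the target neighborhood size $t := 2\frac{n+\kp(n)}{n}|\FS|$ satisfies $t > d$, so the binomial coefficients below are nonzero and the estimates are meaningful.

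For a fixed $\FS$ of size $s$ in the probabilistic range and a fixed worker set $T$ of size $t = 2\frac{n+\kp(n)}{n}s$, the event $N(\FS) \subseteq T$ requires every firm in $\FS$ to choose its $d$-subset inside $T$; by independence this has probability at most $\left(\binom{t}{d}/\binom{n+\kp(n)}{d}\right)^{s} \le (t/(n+\kp(n)))^{ds} = (2s/n)^{ds}$. Taking a union bound over the $\binom{n}{s}$ choices of $\FS$ and the $\binom{n+\kp(n)}{t}$ choices of $T$, and estimating the binomials by $(en/s)^s$ and $(e(n+\kp(n))/t)^t$, I would take logarithms and collect terms of order $\frac{(n+\kp(n))s}{n}\log(n)$. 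The decisive point is that the negative contribution $ds\log(2s/n)$ carries the large constant $32$ and, since $2s/n < 1/5$ on this range, a factor $\log(2s/n) < -\log 5$; this dominates the two entropy terms, which carry much smaller constants. The bound per size $s$ therefore decays like $n^{-\Omega(s)}$, and summing over all $s$ in the range leaves the total failure probability well below $1/n$.

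Finally, the boundary size $|\FS| = n/10$ needs the stronger conclusion $|N(\FS)| \ge 0.99(n+\kp(n))$, which I would handle by the same template with $T$ of size $0.99(n+\kp(n))$: the per-firm probability of landing in $T$ is at most $0.99^{d}$, so the union bound gives at most $\binom{n}{n/10}\binom{n+\kp(n)}{0.99(n+\kp(n))}\,0.99^{d n/10}$, and since $dn/10 = 3.2(n+\kp(n))\log(n)$ the factor $0.99^{dn/10}$ crushes the entropy terms. I expect the main obstacle to be purely the constant bookkeeping in the third paragraph---verifying that for every size $s$ the $32$-laden negative term beats the sum of the two entropy terms with enough room that the geometric-type sum over $s$ stays below $1/n$---rather than any conceptual difficulty; the generous constant $32$ in the degree is precisely what buys this margin.
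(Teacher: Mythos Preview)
Your proposal is correct and follows essentially the same first-moment/union-bound route as the paper: treat each firm's out-neighborhood as an independent uniform $d$-subset, bound the probability that $N(\FS)\subseteq T$ by $(|T|/(n+\kp(n)))^{d|\FS|}$, and union-bound over choices of $\FS$ and $T$ using the estimate $\binom{a}{b}\le (ea/b)^b$; the second part is handled identically with $|T|=0.99(n+\kp(n))$. Your deterministic disposal of sets with $|\FS|\le 16\log(n)$ is a small refinement not in the paper (the paper just lets the union bound absorb those sizes), and your choice to fix $|T|$ at the threshold rather than summing over smaller $\eta$ is slightly cleaner, but neither changes the substance of the argument.
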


\begin{proof}
Consider a set of firms $\FS$ of size $m < n/10$, and a set of workers $\WS$ of size $\eta < 2\frac{n + \kp(n)}{n} m$. Since the edges coming out of $\FS$ point to independent random workers, the probability that all edges point towards workers in $\WS$ is at most
\[ \left(\frac{\eta}{n + \kp(n)}\right)^{\frac{32(n + \kp(n))m \log n}{n}}.\]
Taking a union bound over all sets of firms of size $m$, and all sets of workers of size $\eta$, the probability that there is a set of firms of size $m$ which points to $\eta < 2\frac{n + \kp(n)}{n} m$ workers is upper bounded by
 \begin{eqnarray*} \binom{n}{m}\binom{n+ \kp(n)}{\eta}\left(\frac{\eta}{n + \kp(n)}\right)^{\frac{32(n + \kp(n))m \log n}{n}} & \le & \\
 \binom{2n}{2\frac{n + \kp(n)}{n} m}^2 \left(\frac{ 2\frac{n + \kp(n)}{n} m }{n + \kp(n)}\right)^{\frac{32(n + \kp(n))m \log n}{n}} & = & \\
 \binom{2n}{2\frac{n + \kp(n)}{n} m}^2 \left(\frac{ 2 m }{n}\right)^{\frac{32(n + \kp(n))m \log n}{n}} & \le & \\
 \binom{2n}{4 m}^2 \left(\frac{ 2 m }{n}\right)^{32 m \log(n)} & \le &\\
 \left(\frac{2n e}{4m}\right)^{8m} \left(\frac{ 2 m }{n}\right)^{32 m \log(n)} & \le & 1/n^2,
 \end{eqnarray*}
where the second inequality uses $0 \le \kp(n) \le n$, the third uses the fact that $m < n/10$ and binomial coefficients increase until $\binom{2n}{n}$, and the fourth is an approximation of the binomial coefficient.

Using $m < n/10$, we union bound over $m$, and get a failure probability of at most $1 / 10n$.

Similarly, let $\FS$ be a set of workers with $|\FS| = n/10$. The probability that all the workers they point to lie in a set of size at most $0.99(n + \kp(n)$ is

 \begin{eqnarray*}
   \binom{n}{n/10}\binom{n + \kp(n)}{0.99(\kp(n)+ n)} \left(\frac{0.99(n + \kp(n))}{n + \kp(n)}\right)^{3.2n \log(n)} & = & \\
      \binom{n}{n/10}\binom{n + \kp(n)}{0.01(\kp(n)+ n)} \left(\frac{0.99(n + \kp(n))}{n + \kp(n)}\right)^{3.2 n \log(n)} & \le & \\
   \left(\frac{ne}{n/10}\right)^{n/10}\left(\frac{(n + \kp(n))e}{(n + \kp(n))/100}\right)^{(n + \kp(n))/100} \left(\frac{0.99(n + \kp(n))}{n + \kp(n)}\right)^{3.2 n \log(n)} & \le & \\
   (10e)^{n/10}(100e)^{0.02n}\left(\frac{1}{100}\right)^{3 n \log(n)} \le 1/n^2,
 \end{eqnarray*}
 which finishes the proof.
\end{proof}

If $\kp(n) > n$, we can prove the following stronger claim, via a similar technique.

\begin{claim} \label{expander_large_kappa}
With probability at least $1 - 1/n$, for any set of firms $\FS$, we have that
\[\left|N(\FS)\right| \ge 2\frac{n + \kp(n)}{n}  |\FS|,\]
where $N(S)$ is the set of neighbors of a set of vertices $S$.
\end{claim}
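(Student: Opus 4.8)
The plan is to rerun the union bound from the proof of \autoref{expander} essentially unchanged, the single new ingredient being that the hypothesis $\kp(n) > n$ forces the out-degree $32\frac{n+\kp(n)}{n}\log(n)$ of every firm to exceed $64\log(n)$; this extra concentration is exactly what lets us drop the restriction $|\FS| < n/10$. Write $N = n+\kp(n)$ for the number of workers and $r = N/n > 2$, so that the out-degree is $32r\log(n)$ and the target expansion is $2r|\FS|$. Before anything else I would record the harmless restriction $|\FS| < n/2$: since $N(\FS)$ can never exceed the $N$ available workers, the inequality $|N(\FS)| \ge 2r|\FS|$ is feasible only while $2r|\FS| \le N$, i.e.\ while $|\FS| \le n/2$, and it is vacuous (indeed impossible) beyond that point, so it suffices to prove the bound for every integer $m = |\FS|$ with $m < n/2$.

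For a fixed such $m$, set $\eta < 2rm$. Exactly as in \autoref{expander}, the $32rm\log(n)$ edges leaving a fixed $\FS$ of size $m$ point to independent uniform workers, so the probability that they all land inside a fixed worker set $\WS$ of size $\eta$ is at most $(\eta/N)^{32rm\log(n)}$, which at the boundary $\eta = 2rm$ equals $(2m/n)^{32rm\log(n)}$. Taking a union bound over the $\binom{n}{m}$ choices of $\FS$ and the $\binom{N}{\eta}$ choices of $\WS$ gives the familiar expression
\[ \binom{n}{m}\binom{N}{\eta}\left(\frac{2m}{n}\right)^{32rm\log(n)}, \]
and I would show this is at most $1/n^2$ for every admissible $m$, after which a union bound over the $O(n)$ values of $m$ yields the claimed failure probability $1/n$.

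I expect the boundary behaviour as $m \to n/2$ to be the main obstacle; everything away from it is the same routine binomial estimation as in \autoref{expander}, only with a larger and hence more favourable exponent coming from $r > 2$. For $m$ bounded away from $n/2$ the factor $(2m/n)^{32rm\log(n)}$ carries a genuinely negative exponent of order $r\log(n)$, so even the crude estimate $\binom{N}{\eta} \le 2^{N}$ is swamped and the product falls below $1/n^2$. But as $m \to n/2$ we have $2m/n \to 1$, the negative exponent degenerates, and the standard estimate $\binom{N}{\eta} \le (Ne/\eta)^{\eta}$ becomes hopelessly loose. The fix is to pass to the complementary coefficient $\binom{N}{\eta} = \binom{N}{N-\eta}$ with $N-\eta = r(n-2m)$: when $m$ lies within $O(n/(r\log n))$ of $n/2$ the complement $N-\eta$ is small, the bound $\binom{N}{N-\eta} \le (Ne/(N-\eta))^{N-\eta}$ is tight, and writing $2m/n = 1 - (n-2m)/n$ and expanding $(1-x)^{32rm\log(n)} \le e^{-32rmx\log(n)}$ shows the two contributions still combine to beat $1/n^2$ for every integer $m \le (n-1)/2$. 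A clean write-up would therefore split the range into $m \le (1-\epsilon)n/2$, handled by the crude $2^{N}$ bound, and $m \in \left((1-\epsilon)n/2,\, n/2\right)$, handled by the complementary-coefficient estimate.
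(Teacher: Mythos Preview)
The paper gives no proof of this claim beyond the sentence ``via a similar technique,'' so there is nothing to compare against; but your proposal has a real gap, and in fact the statement you are trying to prove is false on the range you have chosen.

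Your boundary analysis treats only two of the three factors in the union bound. Near $m=n/2$ you correctly switch to the complementary coefficient $\binom{N}{N-\eta}=\binom{N}{rd}$ with $d=n-2m$, and you correctly estimate $(2m/n)^{32rm\log n}\le e^{-16rd\log n}=n^{-16rd}$. But you still have the factor $\binom{n}{m}$ counting firm sets, and for $m$ within $O(1)$ of $n/2$ this is of order $2^{n}$. Your product is therefore at least $2^{n}\cdot\binom{N}{rd}\cdot n^{-16rd}$, which for bounded $d$ is $2^{n}\cdot n^{-O(r)}$ and blows up. The ``two contributions'' you balance are not the only ones present.

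This is not a defect of the union bound: the inequality $|N(\FS)|\ge 2r|\FS|$ is actually false with high probability once $|\FS|$ is within $o\!\left(\frac{n}{r\log n}\right)$ of $n/2$. For $m=n/2-1$ the inequality would require that every set of $2r+1$ workers is pointed to by more than $n/2$ firms; but a fixed set $T$ of $2r+1$ workers is pointed to by a firm with probability about $(2r+1)\cdot 32\log(n)/n$, so the expected number of firms hitting $T$ is only $\Theta(r\log n)$, far below $n/2$. Hence almost every $\FS$ of size $n/2-1$ misses $T$ entirely and violates the bound.

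So the claim must be read with a genuine size restriction (e.g.\ $|\FS|<n/10$, exactly as in Claim~\ref{expander}), and the ``stronger'' content for $\kp(n)>n$ is not the removal of that restriction but the fact that the two-part conclusion of Claim~\ref{expander} already suffices to reach an unemployed worker: once $|\FS|\ge n/10$ the second part gives $|N(\FS)|\ge 0.99N>n\ge$ (number of employed workers), so $N(\FS)$ must contain an unemployed worker and the path terminates. Rewriting Claim~\ref{expander}'s proof with $\binom{N}{\eta}\le(Ne/\eta)^{\eta}$ in place of the $\binom{2n}{\cdot}$ step (which used $\kp(n)\le n$) gives both parts for $\kp(n)>n$ with the same, in fact easier, arithmetic; that is the ``similar technique'' the paper has in mind.
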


For $\kp(n) \le n$, we also show a similar claim for workers, showing that for any set of workers, the set of firms which point to them is large.

\begin{claim} \label{point_to_workers}
With probability at least $1 - 1/n$, for any set of workers $\WS$ with $|\WS| < n/10$, we have that
\[\left|\{f \ : \ N(f) \cap \WS \neq \emptyset\}\right| \ge 2 |\WS|.\]
\end{claim}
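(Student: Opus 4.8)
The plan is to run the same first-moment argument used in \autoref{expander}, but ``in reverse'': rather than bounding the probability that a small firm-set fails to reach many workers, I would bound the probability that a small worker-set $\WS$ is pointed to by too few firms. The enabling observation is that, because the entries of $\alpha$ are i.i.d.\ with a continuous density (so there are no ties), each firm $f$ ranks the workers according to a uniformly random permutation; hence its out-neighborhood $N(f)$ — the $32\frac{n+\kp(n)}{n}\log(n)$ workers it values most — is a uniformly random subset of that size among the $n+\kp(n)$ workers, and these subsets are mutually independent across firms since $\alpha$ has independent rows.

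First I would fix $\WS$ with $|\WS| = \eta < n/10$ and estimate the probability that a single firm $f$ avoids it, i.e.\ $N(f)\cap\WS = \emptyset$. Writing $d = 32\frac{n+\kp(n)}{n}\log(n)$, this probability equals $\binom{n+\kp(n)-\eta}{d}\big/\binom{n+\kp(n)}{d}$, which factors as $\prod_{i=0}^{d-1}\frac{(n+\kp(n)-\eta)-i}{(n+\kp(n))-i}$ and is therefore at most $\left(1-\frac{\eta}{n+\kp(n)}\right)^{d}\le e^{-d\eta/(n+\kp(n))}$. The convenient point is that $d$ scales like $\frac{n+\kp(n)}{n}$, so the exponent collapses to $32\eta\log(n)/n$ independently of $\kp$, giving a per-firm avoidance probability of at most $n^{-32\eta/n}$.

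Next I would set up the union bound over the complementary direction. If fewer than $2\eta$ firms point into $\WS$, then (since $2\eta < n$) there is a set of firms $\FS$ of size $2\eta$ containing all of them, so each of the at least $n-2\eta$ firms outside $\FS$ avoids $\WS$; by independence this has probability at most $\left(n^{-32\eta/n}\right)^{n-2\eta}$, and using $\eta<n/10$ to get $(n-2\eta)/n>0.8$ this is at most $n^{-25.6\eta}$. Summing over the $\binom{n+\kp(n)}{\eta}$ choices of $\WS$ and the $\binom{n}{2\eta}$ choices of $\FS$, and using $\kp(n)\le n$ so that $\binom{n+\kp(n)}{\eta}\le(2n)^{\eta}$, the failure probability for a fixed $\eta$ is bounded by $(2n)^{\eta}\,n^{2\eta}\,n^{-25.6\eta} = 2^{\eta}n^{-22.6\eta}\le n^{-22\eta}$ for $n$ large. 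Finally I would sum this geometric series over $\eta = 1,\dots,n/10$; it is dominated by its first term and is $O(n^{-22})$, comfortably below $1/n$.

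The only step needing genuine care is the per-firm estimate: one must justify that $N(f)$ really is a uniform $d$-subset — this is exactly where continuity of $g$ and the absence of ties are used — and that independence of the rows of $\alpha$ lets the single-firm bounds multiply over the firms outside $\FS$. Everything past that is the same flavor of binomial-coefficient bookkeeping as in \autoref{expander}, so I expect no real obstacle beyond checking that the constants land below $1/n$.
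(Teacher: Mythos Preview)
Your argument is correct and essentially identical to the paper's own proof: fix a small worker set, upper-bound the probability that a single firm's random top-$d$ subset misses it by $\left(1-\tfrac{|\WS|}{n+\kp(n)}\right)^{d}$, multiply over the (independent) firms outside a putative small container $\FS$, and union-bound over choices of $\WS$, $\FS$, and their sizes. The only cosmetic differences are notational (you use $\eta$ for $|\WS|$ where the paper uses $m$) and that you fix $|\FS|=2|\WS|$ rather than summing over all $|\FS|<2|\WS|$, which slightly streamlines the bookkeeping.
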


\begin{proof}
Consider a set of workers $\WS$ of size $m < n/10$, and a set of firms $\FS$ with $|\FS| = \eta < 2m$. The probability that for every $f \not \in \FS$ we have $N(f) \cap \WS = \emptyset$ is
\[\left(\frac{n + \kp(n) - m}{n + \kp(n)}\right)^{32(n - \eta)\log(n)}\]
Taking a union bound over sets of workers and sets of firms, the probability that there exists a set of workers of size $m$ such that exactly $\eta$ firms point to it, with $\eta < 2m$ is at most
\begin{eqnarray*}
    \binom{n + \kp(n)}{\eta} \binom{n}{n - \eta}\left(\frac{n + \kp(n) - m}{n + \kp(n)}\right)^{32(n - \eta)\log(n)} & = &\\
    \binom{n + \kp(n)}{\eta} \binom{n}{\eta}\left(1 - \frac{m}{n + \kp(n)}\right)^{32(n - \eta)\log(n)} & \le & \\
    \binom{2n}{\eta}^2 \left(1 - \frac{m}{2n}\right)^{32(n - \eta)\log(n)} & \le & \\
        \binom{2n}{2m}^2 \left(1 - \frac{1}{2n}\right)^{16 n \log(n)} & \le & \\
        \left(\frac{2ne}{2m}\right)^{4m} e^{-8  m \log(n)} & \le & \\
                (ne)^{4m} e^{-8  m \log(n)} & \le & \\
                (ne)^{4m} \left(\frac{1}{n}\right)^{8m} & \le & 1/n^3.
\end{eqnarray*}
Taking a union bound over the values of $m$ and $\eta$ finishes the proof.
\end{proof}

The following claim is the only place where we use the continuity of $g$, the density of the distribution $G$ used to sample $\alpha$.
\begin{claim} \label{loss}
With probability $1 - 1/n$, if firm $f$ points to a worker $w$ then $\alpha_{f,w} > 1 - \loss$.
\end{claim}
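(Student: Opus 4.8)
The plan is to fix a single firm $f$, control its failure probability, and then union bound over the $n$ firms. Write $T = 32\frac{n+\kp(n)}{n}\log(n)$ for the out-degree of each firm. By definition $f$ points to $w$ exactly when $\alpha_{f,w}$ is among the $T$ largest of the $n+\kp(n)$ i.i.d.\ entries in row $f$. Hence the event ``$f$ points to some worker $w$ with $\alpha_{f,w} \le 1-\loss$'' is equivalent to the event that \emph{fewer than} $T$ of the entries $\alpha_{f,\cdot}$ exceed $1-\loss$: if at least $T$ entries lie above $1-\loss$ then all $T$ top entries do, and conversely. So it suffices to bound the probability that $X_f := \left|\{j : \alpha_{f,j} > 1-\loss\}\right|$ is smaller than $T$.

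This is where the continuity of $g$ enters. Since $\loss = \frac{512\log(n)}{g(1)n} \to 0$ and $g$ is continuous at $1$ with $g(1) > 0$, for all $n$ large enough we have $g(t) \ge g(1)/2$ throughout the shrinking interval $[1-\loss,1]$. Therefore
\[ \Pr[\alpha_{f,j} > 1-\loss] = 1 - G(1-\loss) = \int_{1-\loss}^{1} g(t)\,dt \ge \frac{g(1)}{2}\cdot\loss = \frac{256\log(n)}{n}. \]
Consequently $X_f$ is a sum of $n+\kp(n)$ i.i.d.\ indicators with mean
\[ \mu := (n+\kp(n))\bigl(1 - G(1-\loss)\bigr) \ge 256\,\frac{n+\kp(n)}{n}\log(n) = 8T. \]
Note the constant $512$ in $\loss$ is tuned precisely so that, even after the factor $1/2$ lost to continuity, the mean sits a full factor $8$ above the threshold $T$.

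With this gap in hand, the tail bound is routine. Applying the multiplicative Chernoff lower-tail inequality, $\Pr[X_f < T] \le \Pr[X_f < \mu/8] \le \exp(-(7/8)^2\mu/2)$. Since $\frac{n+\kp(n)}{n}\ge 1$ we have $\mu \ge 8T \ge 256\log(n)$, so this probability is at most $n^{-c}$ for a large absolute constant $c$ (well above $2$). A union bound over the $n$ firms then yields total failure probability at most $n\cdot n^{-c} \le 1/n$, as required.

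The argument is mostly mechanical; the one genuinely delicate point is the continuity step. I must make the phrase ``for $n$ large enough'' precise by exhibiting an explicit threshold $n_0(g)$ above which $\loss$ is small enough that $g \ge g(1)/2$ on $[1-\loss,1]$, and observe that the conclusion holds trivially (or is absorbed into the stated $1-1/n$ bound) for the finitely many smaller $n$. Everything else, namely the order-statistic reformulation and the Chernoff estimate, follows the same template as the previous claims.
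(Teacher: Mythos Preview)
Your proof is correct and follows essentially the same approach as the paper: reduce to showing that at least $T$ of the $n+\kp(n)$ row entries exceed the threshold, use continuity of $g$ at $1$ to relate the value threshold $1-\loss$ to the probability mass $\frac{256\log(n)}{n}$, and then control a binomial lower tail. The only cosmetic differences are that the paper fixes the quantile first and converts to a value at the end (whereas you do the reverse), and the paper bounds the binomial tail by a direct estimate rather than invoking Chernoff.
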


\begin{proof}
Let $\beta = 32\frac{n + \kp(n)}{n}\log(n)$, let $\thresh$ be such that $G(\thresh) = 1 - 8 \beta / (n + \kp(n)) = 1 - \frac{256 \log(n)}{n}$. The probability that out of $n + \kp(n)$ samples, exactly $\eta < \out$ of them would be above $\thresh$ is
\begin{eqnarray*}
\binom{n + \kp(n)}{\eta}\left(8 \beta/(n + \kp(n)\right)^\eta \left(1 - 8 \beta/(n + \kp(n)\right)^{(n + \kp(n) - \eta)} & \le & \\
\binom{n + \kp(n)}{\beta}\left(8 \beta/(n + \kp(n)\right)^\beta \left(1 - 8 \beta/(n + \kp(n)\right)^{\frac{n + \kp(n)}{2}} & \le & \\
\left(\frac{e(n + \kp(n))}{\beta}\right)^\beta \left(8 \beta/(n + \kp(n)\right)^\beta \left(1 - 8 \beta/(n + \kp(n)\right)^{\frac{n + \kp(n)}{2}} & = & \\
(8e)^\beta \left(1 - 8 \beta/(n + \kp(n)\right)^{\frac{n + \kp(n)}{2}} & \le & \\
(8e)^\beta e^{-4 \beta} & = & \\
e^{-(4 - \log 8)\beta} < 1/n^3.
\end{eqnarray*}
Taking a union bound over all possible values of $\eta < \beta$ and $n$ firms gives the result.

Since $g$ is continuous, there is an interval $[a,1]$ in which $g(x) \ge g(1) /2$. We choose $\thresh \ge 1 - \frac{512\log(n)}{g(1) n}$.
\end{proof}

We now use the graph and its properties to lower bound the profits of the firms.

\begin{claim} \label{graph_property}
If a firm $i$ points to worker that's unemployed then $u_i \ge 1 - \loss$. Moreover, if firm $i$ points to a worker $j$, and $j$ points to $i'$ then $u_i \ge u_i' - \loss$.
\end{claim}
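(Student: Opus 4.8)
The plan is to read off both inequalities directly from the Shapley--Shubik characterization of the core, fed by Claim~\ref{loss}. Recall that for any core allocation $(\mu,u,v)$ of an assignment game the payoffs satisfy: (i) $u_i,v_j \ge 0$ for all agents; (ii) feasibility on matched pairs, $u_i + v_j = \alpha_{i,j}$ whenever $\mu(i)=j$; (iii) no blocking pair, $u_i + v_j \ge \alpha_{i,j}$ for \emph{every} firm--worker pair $(i,j)$; and (iv) $u_i = 0$, $v_j = 0$ for unmatched agents. Throughout I would condition on the probability-$(1-1/n)$ event of Claim~\ref{loss}, so that whenever a firm $f$ points to a worker $w$ in $\GE$ we have $\alpha_{f,w} > 1 - \loss$; I also use that $G$ is supported on $[0,1]$, so every entry satisfies $\alpha_{i,j} \le 1$.

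For the first statement, suppose firm $i$ points to an unemployed worker $w$. Since $w$ is unmatched, condition (iv) gives $v_w = 0$, and the no-blocking-pair inequality (iii) for the pair $(i,w)$ reads $u_i + v_w \ge \alpha_{i,w}$, hence $u_i \ge \alpha_{i,w}$. Because $i$ points to $w$, Claim~\ref{loss} yields $\alpha_{i,w} > 1 - \loss$, and therefore $u_i > 1 - \loss$, as desired.

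For the second statement, suppose $i$ points to $j$ and $j$ points to $i'$, i.e.\ $\mu(j) = i'$. Feasibility (ii) on the matched pair $(i',j)$ gives $v_j = \alpha_{i',j} - u_{i'}$, while the no-blocking-pair inequality (iii) for $(i,j)$ gives $u_i \ge \alpha_{i,j} - v_j = \alpha_{i,j} - \alpha_{i',j} + u_{i'}$. Since $i$ points to $j$, Claim~\ref{loss} gives $\alpha_{i,j} > 1 - \loss$, and $\alpha_{i',j} \le 1$, so $\alpha_{i,j} - \alpha_{i',j} > -\loss$ and we conclude $u_i > u_{i'} - \loss$.

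There is no genuine analytic obstacle here: the entire probabilistic content has already been absorbed into Claim~\ref{loss}, and what remains is a two-line deduction. The only points requiring care are bookkeeping ones, namely invoking the right core condition in each case (the stability inequality for the \emph{pointed-to} pair and the feasibility equation for the \emph{employment} pair), correctly reading off which agents are matched versus unmatched from the orientation of edges in $\GE$, and remembering that the bound holds on the good event of Claim~\ref{loss} rather than deterministically.
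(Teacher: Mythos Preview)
Your proposal is correct and follows essentially the same approach as the paper: both arguments use the core/stability inequality $u_i + v_j \ge \alpha_{i,j}$ together with $v_w = 0$ for unmatched workers in the first case, and feasibility $u_{i'} + v_j = \alpha_{i',j} \le 1$ in the second, then invoke Claim~\ref{loss} for the lower bound on $\alpha_{i,j}$. The only cosmetic difference is that you substitute $v_j = \alpha_{i',j} - u_{i'}$ explicitly before bounding $\alpha_{i',j} \le 1$, whereas the paper writes $v_j \le 1 - u_{i'}$ directly.
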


\begin{proof}
If $i$ points to $j$ which is unemployed, $i$ could hire $j$ and pay her nothing. Therefore, it must be that $u_i \ge \alpha_{i,j} \ge 1 - \loss$.

For the other case, $i$ points to $j$, and $j$ points to $i'$. since $\alpha^n_{i',j} \le 1$, it must be that $v_j \le 1 - u_{i'}$. But since it is an equilibrium
\[u_i \ge \alpha^n_{i,j} - v_j \ge \alpha^n_{i,j} - 1 + u_{i'} \ge 1 - \loss - 1 + u_{i'} = u_{i'} - \loss,\]
where we used Claim~\ref{loss}
\end{proof}

Note that in particular if there is a path of length $k$ from a firm $i$ to a firm $i'$ then $u_i \ge u_i' - \cdot k $, and if there is a path of length $k$ from firm $i$ to an unemployed worker than $u_i \ge 1 - \loss k$.

\begin{proof}[Proof of \autoref{th:approx_lop}]

Consider any core outcome and two firms $i,i'$. We want to show that $u_i \ge u_{i'} - \frac{1024\log^2(n)}{g(1) n}$.

Let $S_1 = N(i)$ be the set of workers which $i$ points to. If there is $w \in S_1$ such that $\mu(w) = \emptyset$, then by Claim~\ref{graph_property} $u_i \ge \alpha_{i,w} \ge 1 - \loss$, and since $u_j \le 1$, we are done.

If for every $w \in S_1$ we have $\mu(w) \neq \emptyset$, then $\left|N(S_1)\right| = \left|S_1\right|$. In this case, let $S_2 = N\left(N(S_1)\right)$, and by Claim~\ref{expander} (or Claim~\ref{expander_large_kappa} if $\kp(n) > n$) we have $|S_2| \ge 2 \cdot |S_1|$. If there is $w \in S_2$ such that $\mu(w) = \emptyset$, there is a path of length 2 to an unemployed worker, and  $u_i \ge 1 - \frac{1024\log(n)}{g(1) n}$.

Continuing this by induction, if for some $w \in S_k$ we have $\mu(w) = \emptyset$ then $u_i \ge 1 - k \loss$. Otherwise, we define $S_{k+1} = N\left(N(S_k)\right)$, and get that $|S_{k+1}| \ge \nneigh |S_{k}|$. If $\kp(n) > n$, We continue with this process until we get to an unemployed worker and we are done in $\log(n)$ steps.

If $\kp(n) \le n$, we can continue for at most $log(n) - 3$ steps, after which we would be left with a set of firms $\FS$ of size $n/8$.

We now start with the target firm $i'$. There is a worker $w$ that points to it. According to Claim~\ref{point_to_workers}, there are at least two firms that point to $w$. Now there are two workers who point to those firms, four firms that point to them, etc. We continue this for $\log(n) - 4$ steps until we get to a set of workers $\WS$ of size at least $n/16$. However, invoking the second part of Claim~\ref{expander}, the set $\FS$ points to at least $0.99(n + \kp(n))$ workers, and since $|\WS| > 0.01 (n + \kp(n))$ by the pigeonhole principle there is an edge from $\FS$ to $\WS$, and a path of length at most $2 \log(n) - 6$ from $i$ to $i'$, which implies the theorem for firms.

We now show the second part of the theorem, namely that workers get similar salaries. If the market is balanced ($\kp(n) = 0$), we can apply the same argument. Else, let $w$ be a worker, and we show that $v_w \le \frac{1024\log^2(n)}{g(1) n}$. If $w$ is unemployed then $v_w = 0$ and we are done. Else, let $f$ be the firm which hires $w$.

If $\kp(n) > n$, the firm $f$ has a path of length $\log(n)$ to an unemployed worker and therefore $u_f \ge 1 - \frac{512\log^2(n)}{g(1) n}$ and therefore $v_w \le \frac{512 \log^2 n}{n}$.

If $\kp(n) \le n$, let $w'$ be an unemployed worker, which has a firm $f'$ pointing to it. Since for $\kp(n) \ge n$, any two firms have a path of length at most $2 \log(n) - 6$ between them we have

\[u_f \ge u_{f'} - (2\log(n) - 6) \loss,\]

and since $f'$ points to an unemployed worker $u_{f'} \ge 1 - \loss$. Therefore $u_f \ge 1 - (2\log(n) - 4) \loss$. But since $u_f + v_w \le 1$, it must be that $v_w \le \frac{1024\log^2(n)}{g(1) n}$ as required.
\end{proof}

\section{Applications} \label{sec:applications}

\subsection{Balanced markets}

We first deal with the case of $k(n) \equiv 0$. In this case we will show that whp the firm-optimal core allocation gives almost all the surplus to the firms. From symmetry this will imply that the core is ``long'' in the sense that different market mechanisms can lead to very different divisions of the surplus.

\begin{corollary} \label{cor:balanced_shares}
In a balanced market, with probability $1 - O\left(\frac{1}{n}\right)$, under the firm-optimal core allocation the workers' share of the surplus is $O\left(\frac{\log^2(n)}{n}\right)$.
\end{corollary}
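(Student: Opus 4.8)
The plan is to derive the corollary from \autoref{th:approx_lop} together with two elementary facts about the balanced firm-optimal allocation: that some worker is paid nothing, and that the realized surplus is $\Theta(n)$. Fix the firm-optimal core allocation $(\mu,u,v)$. Since $k(n)=0$ and every entry of $\alpha^n$ is positive almost surely, the optimal matching $\mu$ is perfect, so all agents are matched. I first claim $\min_w v_w = 0$. Suppose instead $\delta := \min_w v_w > 0$, and shift payoffs by setting $u'_f = u_f + \delta$ and $v'_w = v_w - \delta$. This preserves every matched-pair equality $u_f + v_{\mu(f)} = \alpha^n_{f,\mu(f)}$, leaves every pairwise sum $u_f+v_w$ (and hence every core inequality $u_f + v_w \ge \alpha^n_{f,w}$) unchanged, and keeps all payoffs nonnegative. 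Thus $(\mu,u',v')$ is a core allocation in which every worker's salary is strictly smaller, contradicting the fact (the lattice property of \cite{ss1971} quoted in \autoref{sec:model}) that the firm-optimal allocation simultaneously minimizes salaries. Hence some worker earns $0$.

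Next I would invoke \autoref{th:approx_lop}: with probability $1-O(1/n)$ all salaries lie within $c\log^2 n / n$ of one another. Combined with $\min_w v_w = 0$ this forces $v_w \le c\log^2 n / n$ for every worker, so the workers' total payoff is at most $c\log^2 n$. For the denominator I would lower bound the surplus $\sigma = \mathrm{v}(F^n\cup W^n)$ by the value of any fixed perfect matching, e.g.\ the diagonal one, giving $\sigma \ge \sum_{i=1}^n \alpha^n_{i,i}$. The summands are i.i.d., bounded in $[0,1]$, with mean $\mu_G = \int_0^1 x\,g(x)\,dx \in (0,1)$, so Hoeffding's inequality yields $\sigma \ge \tfrac12 n\,\mu_G$ with probability $1 - e^{-\Omega(n)}$.

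On the intersection of these two high-probability events, which still has probability $1-O(1/n)$, the workers' share is
\[ \frac{\sum_w v_w}{\sigma} \le \frac{c\log^2 n}{\tfrac12 n\,\mu_G} = O\!\left(\frac{\log^2 n}{n}\right), \]
as required. The only genuinely new ingredient is the zero-salary observation; everything else is routine once \autoref{th:approx_lop} is in hand, so there is no serious obstacle. The one point to get right is that ``firm-optimal'' must be used in the strong sense that it minimizes every worker's salary at once — this is exactly what makes the uniform downward shift in the first step a contradiction. One should also note that the linear surplus lower bound is robust (any perfect matching works, and the concentration is exponential), so it comfortably survives the union bound with \autoref{th:approx_lop}.
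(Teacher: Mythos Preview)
Your proof is correct and follows essentially the same line as the paper's: both observe that some worker must earn zero under the firm-optimal allocation (via the uniform downward shift), apply \autoref{th:approx_lop} to cap every salary at $O(\log^2 n/n)$ and hence total salaries at $O(\log^2 n)$, and divide by a $\Theta(n)$ surplus. The only cosmetic difference is that the paper invokes the random-assignment literature for the surplus estimate $n-o(n)$, whereas your Hoeffding bound on a fixed perfect matching is more self-contained and equally sufficient.
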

\begin{proof}
Under the firm-optimal core allocation at least one worker must get exactly $0$ (or else we can substract some small $\epsilon > 0$ from each worker's salary and remain with a new core allocation which is even better for the firms). \autoref{th:approx_lop} then immediately implies that the sum of workers' salaries is $O(\frac{\log^2(n)}{n})$, whereas routine arguments from the theory of random assignment problems prove that the total surplus is $n - o(n)$.
\end{proof}

For the benchmark case of uniform distribution we can also provide tight bounds for the sum of workers' salaries under the firm-optimal core allocation.\footnote{In fact, the same proof strategy works, with minor modifications, for any distribution with continuous and strictly positive density. Furthermore, it also applies to unbalanced markets in which the number of firms is smaller than the number of workers.}

\begin{theorem} \label{th:balanced_fosm_bound}
In a balanced market with uniform distributions, $G = U[0,1]$, let $\psi^F\left(M^n\right) = \left(\mu^{F,n},u^{F,n},v^{F,n}\right)$ be the firm-optimal core allocation. Then
\begin{enumerate}
\item $E \left[ \sum_{i \in \{1,\dots,n\}} v^{F,n}_i \right] \leq \log(n)$, and
\item $E \left[ \sum_{i \in \{1,\dots,n\}} v^{F,n}_i \right] = \Omega\left(\log(n)\right)$.
\end{enumerate}
\end{theorem}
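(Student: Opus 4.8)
The plan is to build on the standard marginal-product description of the firm-optimal core (a classical property of the assignment game, see \cite{ss1971,dg1985}): in $\psi^F(M^n)$ every firm earns its marginal contribution, so, writing $\mathrm{v}$ for the grand-coalition value $\mathrm{v}(F^n\cup W^n)$ and $\mathrm{v}_{-f}$ for the value after deleting firm $f$, we have $u^{F,n}_f=\mathrm{v}-\mathrm{v}_{-f}$, and hence $v^{F,n}_{\mu(w)}=\alpha^n_{f,w}-u^{F,n}_f$ is exactly the re-optimization gain unlocked when firm $f=\mu(w)$ is deleted and worker $w$ is released. This yields the identity $\sum_i v^{F,n}_i=\mathrm{v}-\sum_f u^{F,n}_f$ together with the per-firm decomposition $1-u^{F,n}_f=(1-\alpha^n_{f,\mu(f)})+v^{F,n}_{\mu(f)}$. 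Since the matching loss $\sum_f(1-\alpha^n_{f,\mu(f)})=n-\mathrm{v}$ has expectation $\Theta(1)$ for the uniform distribution by the classical random-assignment results \cite{walkup1979,aldous2001}, the entire theorem reduces, up to an additive $O(1)$, to estimating the total re-optimization gain $\sum_i v^{F,n}_i$ itself.

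For the upper bound I would control each salary through the competitive expression $v^{F,n}_w=\max_f\bigl(\alpha^n_{f,w}-u^{F,n}_f\bigr)$. By \autoref{th:approx_lop} all firm profits satisfy $u^{F,n}_f=1-O(\log^2 n/n)$, their average deficit is $\Theta(\log n/n)$ (some worker is paid $0$ in the firm-optimal allocation, as in the proof of Corollary~\ref{cor:balanced_shares}), and each matched value $\alpha^n_{f,\mu(f)}$ is within $O(1/n)$ of $1$. The naive estimate that replaces every $u^{F,n}_f$ by the common minimum only gives $O(\log^2 n)$, since it charges the full core width to all $n$ workers at once; to reach the sharp $\log n$ I would instead bound the released worker's gain by the value of the optimal alternating path it opens and sum the step-gains by rank, where a careful accounting of the top order statistics of each column makes the contributions telescope into a harmonic sum $H_n\le(1+o(1))\log n$.

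The lower bound is where the $\log n$ genuinely resides, and it must come from the global augmenting-path structure rather than from single swaps: one alternating step gains only $\Theta(1/n)$, and it is the presence of $\Theta(\log n)$ productive steps---the same logarithm that drives the core-width bound through the expander $\GE$---that accumulates to $\Theta(\log n/n)$ per worker. Concretely, I would aim to show that after a firm is deleted the alternating path to a releasable worker has length $\Omega(\log n)$ with high probability (the matching companion to the $2\log(n)-6$ upper bound in the proof of \autoref{th:approx_lop}, obtained from the same moment estimates, since $N(\FS)$ cannot expand by more than a constant factor per step), and that a constant fraction of these steps yield a strictly positive gain of order $1/n$. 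A more self-contained alternative is to exhibit, for $\Omega(n)$ workers, two firms that both value the worker within $\Theta(\log n/n)$ of the top while retaining near-maximal profit, which forces $v^{F,n}_w=\Omega(\log n/n)$.

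The main obstacle, shared by both directions, is that $\sum_i v^{F,n}_i$ is an intrinsically second-order quantity: the means of the profits cancel in $\mathrm{v}-\sum_f u^{F,n}_f$, so the answer is governed by the joint fluctuations of the profit vector and the individual columns, not by first moments. This makes leave-one-out decoupling treacherous in the balanced regime---deleting a single worker turns the market firm-heavy and collapses every firm profit from $1-\Theta(\log n/n)$ to $o(1)$, so $u^{F,n}_f$ in $M^n$ is badly approximated by its value in the market with that worker removed, and the correlation between a column and the profit vector cannot simply be ignored. I therefore expect the hardest part to be making the augmenting-path accounting rigorous while tracking these correlations, and pinning down the constant so as to obtain the stated tight pair of bounds rather than the $O(\log^2 n)$ that the core-width estimate alone delivers.
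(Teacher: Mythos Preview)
Your plan is a legitimate structural attack, but it is not the route the paper takes, and it is missing the idea that makes the upper bound clean. The paper does \emph{not} work with the marginal-product identity $u^{F,n}_f=\mathrm{v}-\mathrm{v}_{-f}$ or with alternating-path re-optimization at all. Instead it exploits that the firm-optimal core allocation is the terminal state of an ascending salary-adjustment process in the style of \cite{ck1981}: firms are introduced one at a time, and between the entry of firm $k$ and firm $k+1$ the currently active firms bid up salaries in increments of $\epsilon$. The entire quantity $\sum_i v^{F,n}_i$ is then the total salary increase along the run of this process, and one bounds the expected increase \emph{between} the arrival of firm $k$ and firm $k+1$. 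The point is that while firm $k$ is still proposing there are $n-k+1$ as-yet-unmatched workers, and for uniform $\alpha$ the probability that the next $\epsilon$-drop in the proposer's aspiration lands on one of them is essentially $(n-k+1)\epsilon$; hence the expected salary mass deposited in that phase is $1/(n-k+1)$, and summing over $k$ gives $H_n\le 1+\log n$. This dynamic bookkeeping completely sidesteps the correlation problem you flag in your last paragraph: you never need to compare $u^{F,n}_f$ in $M^n$ to its value in a leave-one-out market, because the salaries are tracked incrementally along a single realization.

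By contrast, your proposed route---bounding $v^{F,n}_w=\max_f(\alpha^n_{f,w}-u^{F,n}_f)$ via the core-width estimate and then sharpening through alternating-path telescoping---runs straight into the second-order cancellation you yourself identify, and your sketch does not explain how the ``careful accounting of the top order statistics of each column'' actually decouples the column from the profit vector it helped determine. That is a genuine gap: without the auction-process coupling (or an equivalent device), I do not see how you get below $O(\log^2 n)$. For the lower bound the paper is brief to the point of omission (it says only that ``a similar exercise'' works once one knows salaries stay below $1-c\log^2(n)/n$), so your augmenting-path heuristic and the two-competing-firms construction are both plausible, but the paper's intended argument is again the process one, run in reverse, rather than a static expander computation.
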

\begin{proof}
For the upper bound, consider a variant of the approximation algorithm suggested by \cite{ck1981} to solve a generalized version of the assignment game, in which firms are ordered from $1$ to $n$, and at each round only the lowest-number firm that still wants to propose actually proposes. What is the expected number of proposals between the first proposal of firm $k$ and the first proposal of firm $k+1$? Assume that it is now firm $f$'s turn to propose, and its previous aspiration level (i.e.\ the maximal utility it would get by giving some worker her current salary) was $u_i$. If for some unmatched $w \in W^n$ we have $\alpha^n_{f,w} \in [u-\epsilon,u)$ then firm $f$ will propose to that worker within at most $k$ rounds (or otherwise it will temporarily match to some other worker and be replaced by another proposing firm). Therefore the probability not proposing to some unmatched worker is above below by
\[
1 - \left(1 - \epsilon\right)^{n - k + 1} \approx (n-k+1)\epsilon,
\]
and so the expected number of steps until the first proposal of firm $k+1$ is at most $\frac{1}{(n-k+1)\epsilon}$, and the the expected raise in workers' salaries is $\frac{1}{n-k+1}$. Summing over $k$ we get that the expected sum of workers' salaries is at most
\[
1 + \frac{1}{2} + \dots + \frac{1}{n-1} + \frac{1}{n} \approx \log(n).
\]
For the lower bound a similar exercise can be taken, but taking into account that whp all workers' temporary salaries never exceed $1 - \frac{c\log^2(n)}{n}$. The details of the proof are omitted. 
\end{proof}

\subsection{Unbalanced markets} \label{subsec:applications_unbalanced}

The same reasoning that works for balanced markets can be applied to unbalanced markets. Our next result states that even slightly unbalanced markets are likely to give all the surplus to the short side.

\begin{corollary} \label{cor:unbalanced_shares}
In an unbalanced market, with probability $1 - O\left(\frac{1}{n}\right)$, under any core allocation all agents on the long side get $O\left(\frac{\log^2(n)}{n}\right)$, and the long side's share of the surplus is $O\left(\frac{\log^2(n)}{n}\right)$ as well.
\end{corollary}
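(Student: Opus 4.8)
The plan is to derive both parts of the corollary directly from \autoref{th:approx_lop}, supplemented only by the standard facts that unmatched agents receive nothing in the core and that the total surplus is $\Theta(n)$. Since the market is unbalanced we have $k(n) > 0$, so the workers form the long side, and the two quantities to control are the individual salaries $v_i^n$ and their sum.

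For the first part I would simply invoke the final portion of the proof of \autoref{th:approx_lop}. That argument does not merely bound the \emph{spread} of salaries: in the unbalanced case it produces, for every worker $w$, the absolute bound $v_w \le \frac{1024\log^2(n)}{g(1)n}$, by exhibiting a short path from the firm $\mu(w)$ (when $w$ is matched) to an unemployed worker and using $u_{\mu(w)} + v_w \le 1$. Unmatched workers trivially have $v_w = 0$. Hence with probability $1 - O(1/n)$ every agent on the long side gets $O(\log^2(n)/n)$, which is exactly part 1.

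For the second part I would bound the workers' aggregate payoff and compare it to the total surplus. First, recall the standard property of cores of assignment games (\cite{ss1971}): in any core allocation each agent left unmatched by $\mu$ receives exactly $0$, since efficiency forces the matched pairs' payoffs to sum to the matching value. Because there are only $n$ firms, at most $n$ workers are matched, so the number of workers carrying positive salary is at most $n$. Combining this with the per-worker bound from part 1 gives
\[
\sum_{i} v^{n}_i \le n \cdot \frac{1024\log^2(n)}{g(1)n} = O\left(\log^2(n)\right).
\]
It then remains to observe that the total surplus is $\Theta(n)$: it is at most $n$ (at most $n$ matched pairs, each of value at most $1$), and the same routine random-assignment argument used in the proof of Corollary~\ref{cor:balanced_shares} shows it is $n - o(n)$ with probability $1 - O(1/n)$ (with more workers than firms the lower bound only becomes easier). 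Dividing, the long side's share of the surplus is $O(\log^2(n))/(n - o(n)) = O(\log^2(n)/n)$, as claimed.

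The computations here are all routine; the only points that require a moment's care are the two off-the-shelf facts I lean on — that unmatched agents are paid zero in the core, and that the surplus concentrates around $n$ — together with the bookkeeping observation that it is the number of firms, not the number of workers, that caps how many workers can carry positive salary. I expect no genuine obstacle, since the heavy lifting (the per-agent salary bound for the long side) is already carried out inside \autoref{th:approx_lop}.
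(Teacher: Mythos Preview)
Your proposal is correct and essentially matches the paper's own proof, which simply says the result is ``immediate from \autoref{th:approx_lop} and the fact that under any core allocation at least one agent on the long side remains unmatched.'' The only cosmetic difference is that you reach inside the proof of \autoref{th:approx_lop} to cite the absolute bound $v_w \le \frac{1024\log^2 n}{g(1)n}$ directly, whereas the paper invokes only the spread bound from the theorem statement together with the existence of an unmatched worker earning zero; both routes give the same conclusion with the same effort.
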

\begin{proof}
Immediate from \autoref{th:approx_lop} and the fact that under any core allocation at least one agent on the long side remains unmatched.
\end{proof}

\section{Simulations} \label{sec:simulations}

While the bound established in \autoref{th:approx_lop} converges to zero quite rapidly, it is still worth while to see how powerful is the phenomenon described. Indeed, as we stress in the concluding section, we expect the wage dispersion to behave like $O\left(\frac{\log(n)}{n}\right)$. In this section we present results of computerized simulations that demonstrate how quick is the contraction of payoffs' dispersion, and how this affects the market. We then use more simulations to explore payoff dispersion in the presence of unbounded distributions. Each figure is based on averaging 400 trials for each data point.

We first focus on the benchmark case of uniform $[0,1]$ distribution, and study wage dispersion in balanced markets under the firm-optimal core allocation. The right panel of \autoref{fig:balanced_uniform} shows that in this case both the mean salary of workers and the maximum salary any worker gets (which is also the maximal difference between any two workers' salaries) approach zero very quickly. The left panel of the same figure exemplifies the fact that the core in balanced markets is long, as suggested by \autoref{cor:balanced_shares}.

\begin{figure}[htp]
\centering
\includegraphics[width=60mm]{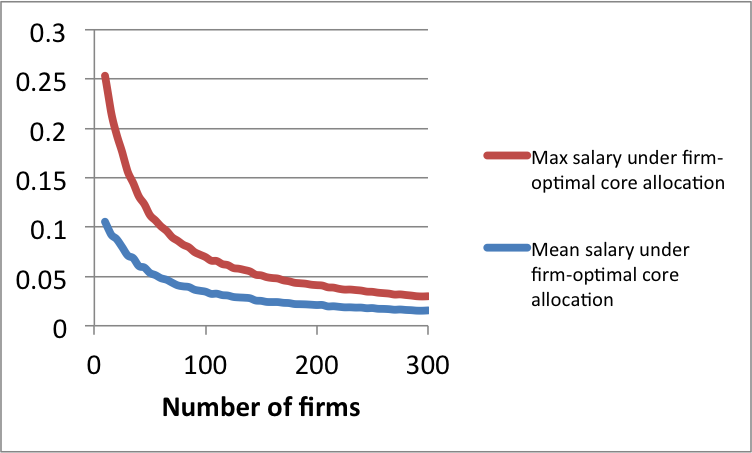}
\ \ \
\includegraphics[width=60mm]{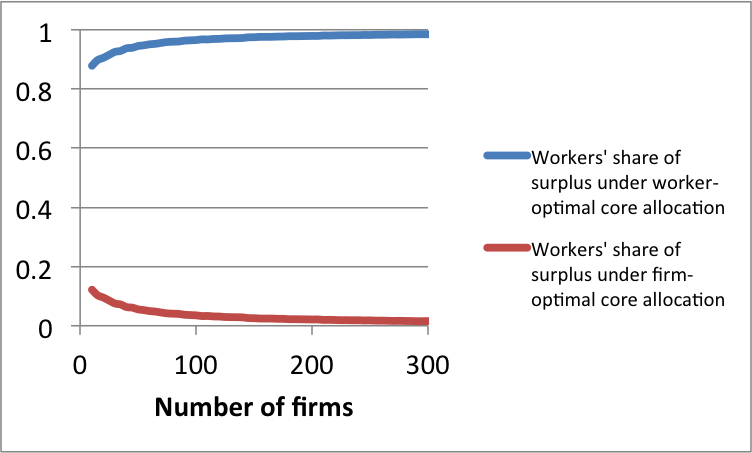}
\caption{Balanced markets ($n$ firms, $n$ workers), distribution: $U[0,1]$}
\label{fig:balanced_uniform}
\end{figure}

In unbalanced markets we expect the core to be much more narrow, per \autoref{cor:unbalanced_shares}. The left panel of \autoref{fig:unbalanced_uniform} shows that when the number of workers is only slightly larger than the number of firms, both the workers' mean salary and the maximum salary any worker gets, approach zero rapidly, even under the worker-optimal core allocation. Furthermore, as the right panel demonstrates, in this case the workers' share in the surplus approaches 0, even under the worker-optimal core allocation. Finally, \autoref{fig:unbalanced_uniform_shift} parallels Figure~4 of \cite{akl2013}, and depicts workers' share of the surplus when the number of workers is constant at $50$, and the number of firms varies from $20$ to $80$.

\begin{figure}[htp]
\centering
\includegraphics[width=60mm]{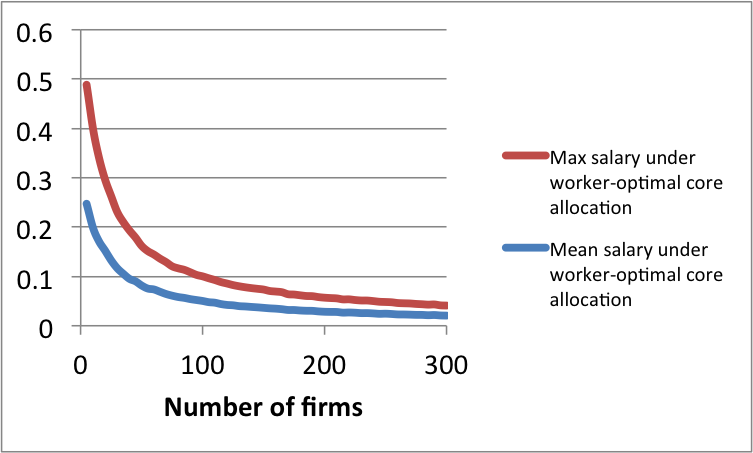}
\ \ \
\includegraphics[width=60mm]{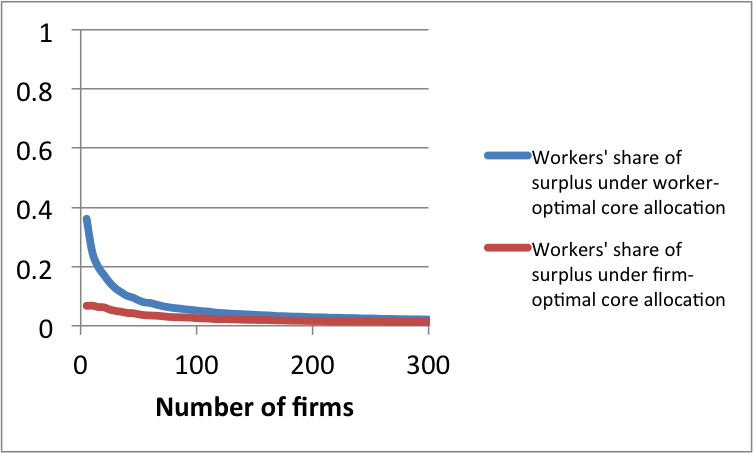}
\caption{Unbalanced markets ($n$ firms, $n+1$ workers), distribution: $U[0,1]$}
\label{fig:unbalanced_uniform}
\end{figure}

\begin{figure}[htp]
\centering
\includegraphics[width=60mm]{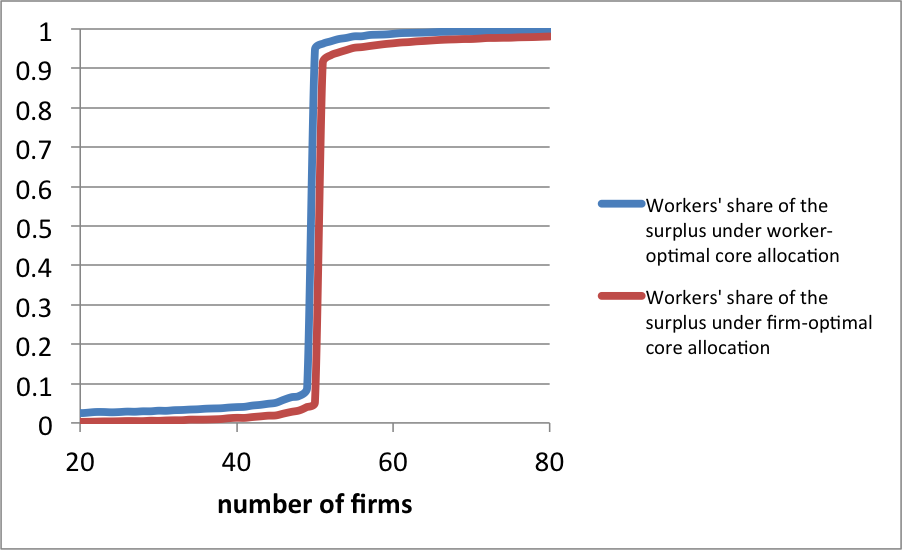}
\caption{Unbalanced markets ($50$ workers), distribution: $U[0,1]$}
\label{fig:unbalanced_uniform_shift}
\end{figure}

\subsection{Unbounded distributions}

A very interesting question is what happens to our results when the distribution $G$ is unbounded. A rough intuition for this case is that unbounded distributions with a heavy tail may create two types of outliers in the market. ``Good'' outliers are agents that are highly productive compared to others, such that agents from the other side fiercely compete on being matched with them. These agents share a significant portion of the surplus they help create, and if they are common enough, they may offset other forces that would otherwise squeeze the surplus from their side (such as an adversarial core allocation, or slight imbalance in favor of the other side of the market). ``Bad'' outliers are agents that are so unproductive in comparison with their peers, that being matched with them becomes the effective reservation value for agents on the other side. If for any reason there are many such agents on one of the sides, then the residual market behaves as if it was unbalanced, and even mildly productive agents that are on the same side as the less-productive agents earn a sizable part of their contribution to the total surplus.

\autoref{fig:unbounded_salaries} describes workers' salaries in a balanced market governed by two distributions: Exponential and Weibull with parameter $0.25$ (a heavy-tailed distribution). As expected, we can see that the maximal salary does not go to $0$, and behaves more like $\log(n)$. In \autoref{fig:unbounded_shares} we can see the workers' share in the total surplus for the same markets.

\begin{figure}[htp]
\centering
\includegraphics[width=60mm]{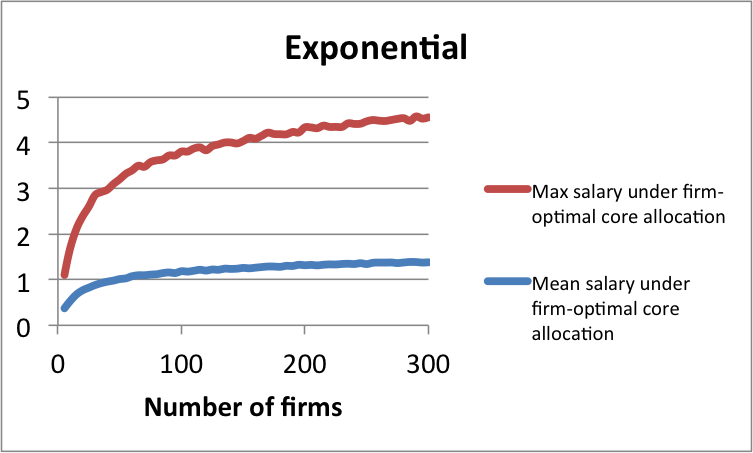}
\ \ \
\includegraphics[width=60mm]{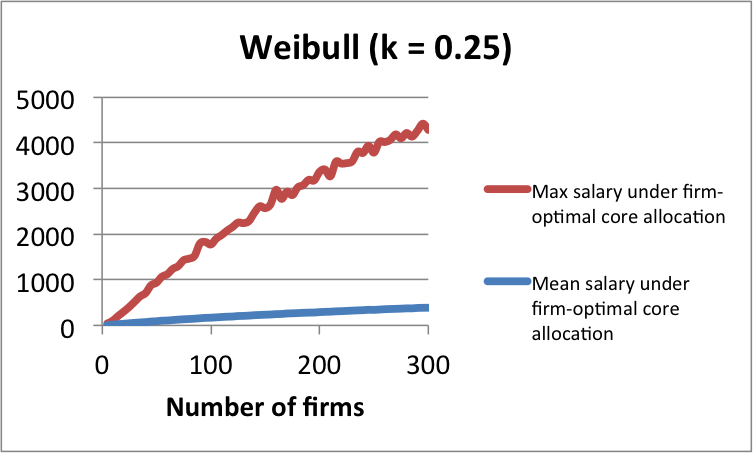}
\caption{Balanced markets, workers' salaries}
\label{fig:unbounded_salaries}
\end{figure}

\begin{figure}[htp]
\centering
\includegraphics[width=60mm]{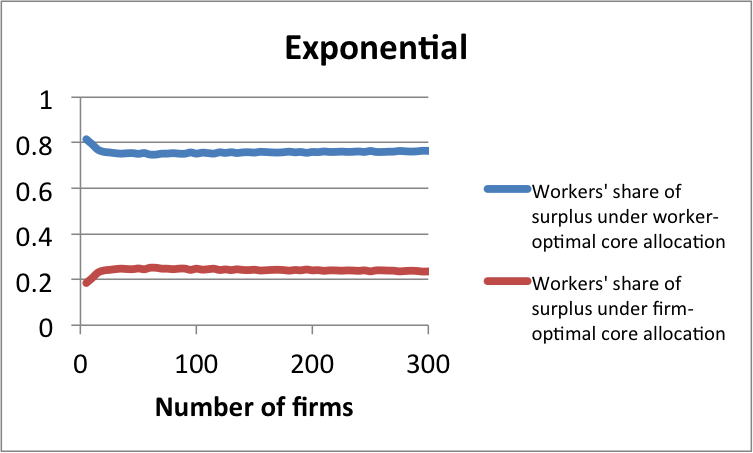}
\ \ \
\includegraphics[width=60mm]{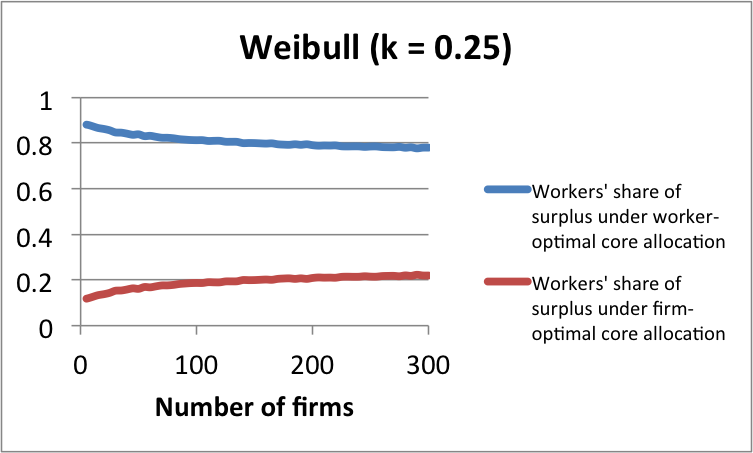}
\caption{Balanced markets, workers' share of surplus}
\label{fig:unbounded_shares}
\end{figure}

We conclude this section by noting that it is relatively easy to establish analytically that \autoref{th:approx_lop} does not hold for the case of an exponential distribution:

\begin{proposition} \label{prp:exp_good_outliers}
If $G$ is the exponential distribution and the market is balanced and governed by the firm-optimal core allocation, then whp there are two workers $i$ and $j$ such that $\left|v_i - v_j\right| = \Omega\left(\log(n)\right)$.
\end{proposition}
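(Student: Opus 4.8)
The plan is to exhibit, with high probability, a single worker whose salary is forced up to $\Omega(\log n)$ by competition between two firms, while recalling that under the firm-optimal allocation some worker is paid exactly $0$; the gap between these two workers is then $\Omega(\log n)$. The ingredients I would rely on are: the no-blocking-pair condition, which in any core allocation gives $u_f + v_w \ge \alpha_{f,w}$ for every firm $f$ and worker $w$; nonnegativity of profits and salaries; the fact that since all entries of $\alpha$ are a.s.\ positive the optimal matching is perfect; and, as in the proof of \autoref{cor:balanced_shares}, that the firm-optimal allocation pays at least one worker exactly $0$.

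Fix constants $1 < \delta < \gamma < 3/2$ (say $\gamma = 6/5$, $\delta = 21/20$) and set $a = \gamma \log n$, $b = \delta \log n$. Call a firm $f$ a \emph{$w$-devotee} if $\alpha_{f,w} \ge a$ while $\alpha_{f,w'} \le b$ for every worker $w' \neq w$, and call a worker $w$ \emph{contested} if at least two distinct firms are $w$-devotees. I would first check that a contested worker suffices: if $f_1,f_2$ are both $w$-devotees, then at most one of them is matched to $w$ in the perfect optimal matching, so the other firm $\hat f$ is matched to some $w' \neq w$, whence $u_{\hat f} \le \alpha_{\hat f, w'} \le b$. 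The blocking-pair constraint for the pair $(\hat f, w)$ then forces $v_w \ge \alpha_{\hat f, w} - u_{\hat f} \ge a - b = (\gamma-\delta)\log n = \Omega(\log n)$. Combined with the worker paid $0$, this proves the proposition on the event that a contested worker exists.

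It remains to show that whp some worker is contested. For fixed $f,w$ one has $\Pr[\alpha_{f,w} \ge a] = e^{-a} = n^{-\gamma}$, and since $\delta > 1$ the event that all $n-1$ of $f$'s other entries are at most $b$ has probability $(1-n^{-\delta})^{n-1} = 1-o(1)$; hence $f$ is a $w$-devotee with probability $p = (1+o(1))\,n^{-\gamma}$. For fixed $w$ the devotee events over the $n$ firms depend on distinct rows of $\alpha$ and are independent, so the number $D_w$ of $w$-devotees is binomial with mean $np = (1+o(1))\,n^{1-\gamma}$, giving $\Pr[D_w \ge 2] = \Theta\!\left(n^{-2(\gamma-1)}\right)$. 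Writing $N$ for the number of contested workers, $E[N] = \Theta\!\left(n^{\,3-2\gamma}\right) \to \infty$ precisely because $\gamma < 3/2$.

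The main obstacle is upgrading this first-moment divergence to a high-probability statement, which I would handle with the second moment method: showing $\mathrm{Var}(N) = o(E[N]^2)$ gives $\Pr[N = 0] \le \mathrm{Var}(N)/E[N]^2 \to 0$. The diagonal contribution $\sum_w \mathrm{Var}\!\left(\mathbf{1}[D_w \ge 2]\right) \le E[N]$ is negligible against $E[N]^2$. The real work is in the off-diagonal covariances $\mathrm{Cov}\!\left(\mathbf{1}[D_w \ge 2], \mathbf{1}[D_{w''}\ge 2]\right)$ for $w \neq w''$; here the useful structural fact is that a single firm can never be a devotee of two distinct workers, since a value $\ge a$ in one column forces all its other entries (including the second worker's) to be $\le b < a$. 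Thus any pair of witnesses for the two contested-worker events consists of four distinct firms whose defining events live on four disjoint rows and are independent. I expect these covariances to match the product of the marginals to leading order and to sum to $o(E[N]^2)$, but carrying out this estimate — in particular controlling the $(1-o(1))$ corrections uniformly — is the one genuinely technical point; everything else reduces to the short deterministic blocking-pair argument above together with elementary exponential tail bounds.
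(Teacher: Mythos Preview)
Your approach is essentially the paper's: define devotee firms (the paper uses thresholds $1.1\log n$ and $\log n$ in place of your $\gamma\log n$ and $\delta\log n$), argue that with high probability some worker attracts two of them, and combine the resulting blocking-pair lower bound on that worker's salary with the zero-salary worker guaranteed by firm-optimality. The technical point you flag is in fact easier than you fear: since a firm cannot be a devotee of two distinct workers, for each firm the outcomes ($w$-devotee, $w''$-devotee, neither) form a multinomial across independent rows, so $D_w$ and $D_{w''}$ are negatively associated and the off-diagonal covariances are nonpositive, giving $\mathrm{Var}(N)\le E[N]$ outright.
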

\begin{proof}
Let $p^n$ denote the probability that for a given firm $f \in F^n$ there exists a worker $w \in W^n$ such that $\alpha^n_{f,w} > 1.1\log(n)$ and $\max_{w' \in W^n \setminus \{w\}} \alpha^n_{f,w'} < \log(n)$. Then
\[
p = n \cdot e^{-1.1\log(n)} \cdot \left(1 - e^{-\log(n)}\right)^{n-1} = \frac{1}{n^{0.1}} \cdot \left(1-\frac{1}{n}\right)^{n-1} \approx \frac{1}{e n^{0.1}}.
\]
This specifically implies that whp there are $\Omega\left(n^{0.8-\epsilon}\right)$ firms that meet the above condition. If any of those two firms happen to have the same worker being the outlier, then this worker must get paid at least $0.1\log(n)$ under any core allocation. Since there are $\Omega\left(n^{1.6-2\epsilon}\right)$ pairs, then we get that there are many workers that get paid $\Theta\left(\log(n)\right)$. Finally, at least one worker's salary is $0$ under the firm-optimal core allocation, and so we are done.
\end{proof}

\subsection{The role of outliers}

To give some intuition to the way wage dispersion looks, and to the role of ``Good'' and ``Bad'' outliers, \autoref{fig:histogram_salaries} presents the wage distribution of workers in a single instance of a firm-optimal balanced markets ($n = 1000$). In the left pane, $G$ is the uniform distribution, while in the right pane $G$ is the exponential distribution.

\begin{figure}[htp]
\centering
\includegraphics[width=60mm]{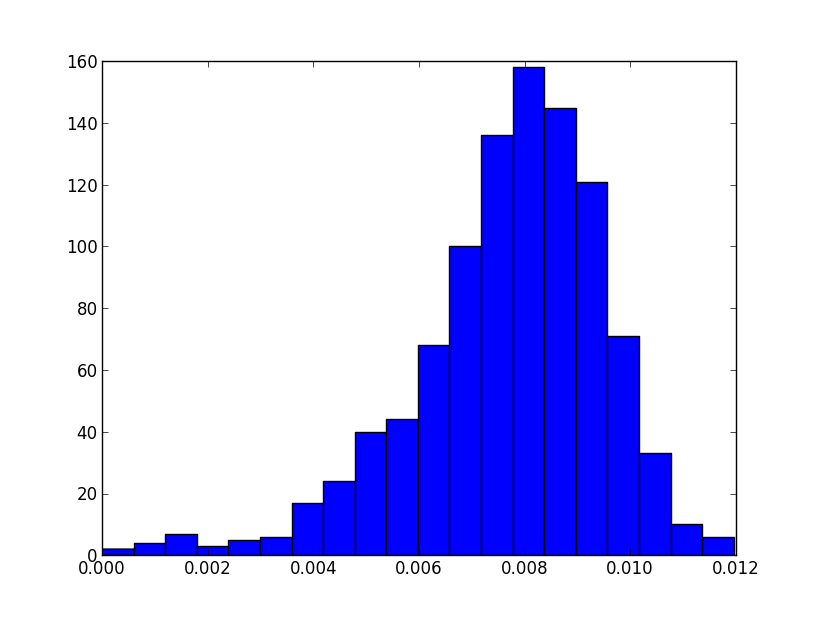}
\ \ \
\includegraphics[width=60mm]{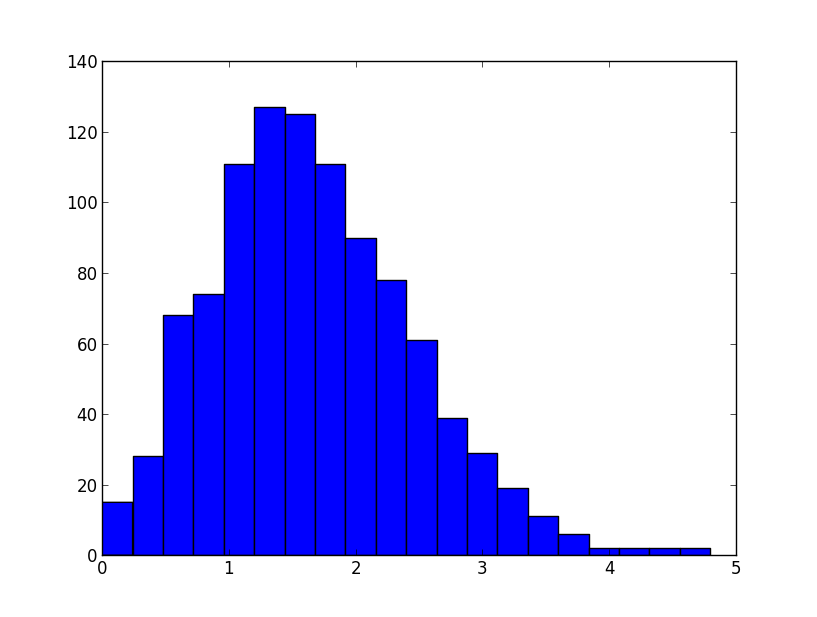}
\caption{Balanced markets, workers' salaries}
\label{fig:histogram_salaries}
\end{figure}

It is not surprising that salaries are higher in the exponential distribution, and that the total surplus is higher. Two other conclusions one can draw are:

\begin{enumerate}
\item The wage distribution in the uniform distribution is balanced, where in the exponential distribution it has a tail to the right
  \item Although the wages in the exponential distribution have a tail, it is not heavy, and most of the salary comes from the majority of the population (there is no $1\%$ which takes the economic pie). We see this as evidence that the contribution to the workers' salaries come from the existence of ``Bad'' outliers and their effect on the population.
\end{enumerate}



In \autoref{fig:histogram_salaries}  it is hard to see the effect that ``Bad'' outliers have in the uniform distribution. To better understand the effect, \autoref{fig:worst_players} depicts the relationship between the sum of salaries of the workers in a firm-optimal outcome, and the quality of the ``Worst'' workers. The left pane shows the the dependency of the sum of salaries on the quality of the ``Worst'' worker (minimum over workers of the maximum value that worker has with a firm), and is based on $100$ trials with $n = 100$. In the right pane, we present the effect of the second ``Worst'' worker. To this end, we sampled $100$ random markets, for which the quality of the worst worker was $0.95 \pm 0.00001$, and depicted the dependency of the sum of salaries in the second ``Worst'' worker.\footnote{We did rejection sampling, that is we sampled random markets, and kept them if the quality of the worst worker was between $0.94999$ and $0.95001$. This required us around a million samples.}

\begin{figure}[htp]
\centering
\includegraphics[width=60mm]{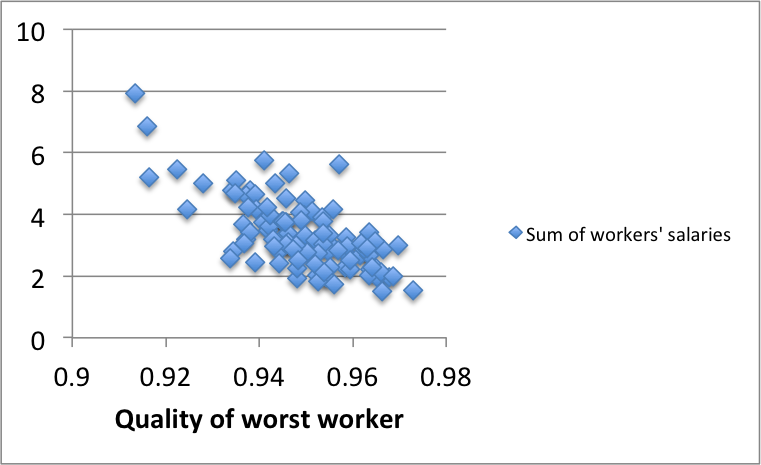}
\ \ \
\includegraphics[width=60mm]{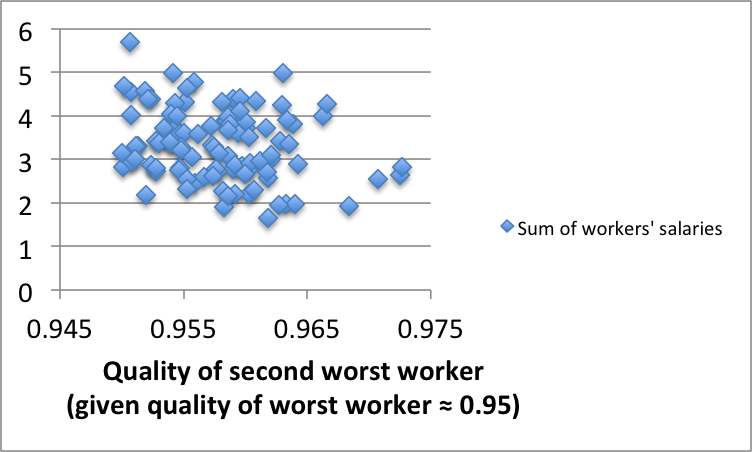}
\caption{Balanced markets ($n$ firms, $n$ workers), distribution: $U[0,1]$}
\label{fig:worst_players}
\end{figure}

Two observations about the graphs are

\begin{enumerate}
  \item The dependency between the total salary and the quality of the worst worker becomes linear (with a slope of $n$) as the quality of the worker becomes lower. The reason is that it serves as a ``reservation value'' for the market, and reducing its quality by $\epsilon$ increases (almost) all wages by (almost) $\epsilon$. We note that when the quality of the worst worker is high, sometimes other workers serve as reservation value, and therefore the slope is not linear.
  \item The second ``Worst'' worker also has a linear effect on the market, although here it is not as prominent.
\end{enumerate}

We note that fitting the first graph to a linear approximation gives an $R^2$ value of $0.48$, and an $R^2$ value of 0.07 for the second graph. The second highest value of the ``Worst'' worker also has an effect on the market. We conjecture that the sum of salaries that the workers get is greatly affected by the $\log(n)$ ``Worst'' workers.

We also note that the quality of the ``Worst'' firm (minimum over the firms of the maximal value this firm has with a worker) has little effect on the sum of salaries the workers get.

\section{Conclusion} \label{sec:conclusion}

As hinted in the introduction, there is an abundance of real-world data suggesting that not only does wage inequality exist, but that it persists across nations, industries and even within firms (see, e.g.\ \cite{mortensen2005} and references therein). While investment and variation in human capital explain a significant portion of the differences in income, there are residual discrepancies that are not so easily explained. Economists have considered many possible approaches to this issue (for example, compensating differentials, motivating effort, complementarities, search frictions, and more) and usually carried an analysis under extremely restrictive assumptions on match value heterogeneity. Our contribution provides tools that may allow interacting those explanations within a more realistic framework.

To give but two examples, our model inherently allows for compensating differentials (by changing reservations values and then applying the analysis), and it is relatively straightforward to add human capital or quality of firms, as long as the production function is separable in those two factors. If, say, workers' quality is captured by a linear (common) contribution to each element in the relevant column, it is easy to show that each worker will get its contribution above the lowest type, and the rest of the surplus will be distributed according to similar principles to those described in this paper. Non-separable production function (that is, complementarities) require more work because the proof must take into account that assortative matching will take place to some extent (see also below).

There are several open questions that present future challenges for our model. The first is providing a tighter bound on wage dispersion. The proof of \autoref{th:approx_lop} used the upper $\frac{\log(n)}{n}$ fraction in order to create an expander, and used it to find paths between different firms. One could try to do so based on the upper $\frac{c}{n}$ fraction, and try to get the correct bound of $\frac{\log(n)}{n}$. There are a couple of obstacles to this approach. The first, is that for some firms using the top $\frac{c}{n}$ edges would already create a loss of $\frac{\log(n)}{n}$ (see Claim~\ref{loss}). A potentially bigger obstacle is that the edges from workers to firms are not random, and are chosen by the allocation. Our proof treats them as adversarial, but this requires us to use more edges from firms to workers (see Claim~\ref{point_to_workers}).

A different direction is studying the induced salary distribution for the exponential distribution and for general unbounded distributions. As mentioned in \autoref{sec:simulations}, unbounded distributions call for new intuitions that may offset the current analysis, and studying them is interesting. Specifically for the exponential distribution, we believe that the sum of workers' salaries in a balanced market governed by the firm-optimal core allocation is $\Theta\left(n\log\left(\log(n)\right)\right)$. The (rough) intuition is that the $\min\max$ over the productivity matrix' entries behaves like $\log(n) - \log\left(\log(n)\right)$, whereas solving the assignment problem would give a mean surplus of $\log(n)$. Moreover, in expectation only $\frac{n}{\log(n)}$ of the agents would have a value above $\log(n) + \log(\log(n))$, and since the maximal value in the matrix is $2 \log(n)$ their total salary can not exceed $2n$.\footnote{The procedure used in the proof of \autoref{prp:exp_good_outliers} only used the ``good'' outliers, and we think that most of the salary comes due to the existence of bad outliers - see \autoref{fig:histogram_salaries}.}

Finally, it is also interesting to analyze correlated random matrices. As mentioned above, adding correlation can significantly distort some of our results. Adding separable noise with agent-specific mean can easily be dealt with, however interaction variables (such as Cobb-Douglas production function) are more tricky. With sufficient agents, we can get near-efficiency on both dimensions (the common element and the pair-specific productivity), but determining who gets what is more involved.

\bibliographystyle{authordate1}
\bibliography{MatchingBibliography}

\begin{thebibliography}{}

\bibitem[\protect\citename{Aldous, }2001]{aldous2001}
Aldous, David~J. 2001.
\newblock The $\zeta$ (2) limit in the random assignment problem.
\newblock {\em Random Structures \& Algorithms}, {\bf 18}(4), 381--418.

\bibitem[\protect\citename{Ashlagi {\em et~al.}, }2011]{abh2011}
Ashlagi, Itai, Braverman, Mark, and Hassidim, Avinatan. 2011.
\newblock Matching with couples revisited.
\newblock {\em pp.  335--336 in} {\em Proceedings of the 12th ACM conference on
  Electronic commerce}.
\newblock ACM.

\bibitem[\protect\citename{Ashlagi {\em et~al.}, }2013]{akl2013}
Ashlagi, Itai, Kanoria, Yashodhan, and Leshno, Jacob~D. 2013.
\newblock Unbalanced random matching markets.
\newblock {\em pp.  27--28 in} {\em ACM Conference on Electronic Commerce}.

\bibitem[\protect\citename{Che and Kojima, }2010]{ck2010}
Che, Yeon-Koo, and Kojima, Fuhito. 2010.
\newblock Asymptotic equivalence of probabilistic serial and random priority
  mechanisms.
\newblock {\em Econometrica}, {\bf 78}(5), 1625--1672.

\bibitem[\protect\citename{Chen {\em et~al.}, }2012]{cfy2012}
Chen, Bo, Fujishige, Satoru, and Yang, Zaifu. 2012.
\newblock {\em Decentralized market processes to stable job matchings with
  competitive salaries}.
\newblock mimeo.

\bibitem[\protect\citename{Coles and Shorrer, }2014]{cs2014}
Coles, Peter, and Shorrer, Ran. 2014.
\newblock Optimal truncation in matching markets.
\newblock {\em Games and Economic Behavior}.
\newblock forthcoming.

\bibitem[\protect\citename{Coles {\em et~al.}, }2014]{cgs2014}
Coles, Peter, Gonczarowski, Yannai, and Shorrer, Ran. 2014.
\newblock {\em Strategic Behavior in Unbalanced Matching Markets}.
\newblock mimeo.

\bibitem[\protect\citename{Coppersmith and Sorkin, }1999]{cs1999}
Coppersmith, Don, and Sorkin, Gregory~B. 1999.
\newblock Constructive bounds and exact expectations for the random assignment
  problem.
\newblock {\em Random Structures \& Algorithms}, {\bf 15}(2), 113--144.

\bibitem[\protect\citename{Crawford and Knoer, }1981]{ck1981}
Crawford, Vincent~P, and Knoer, Elsie~Marie. 1981.
\newblock Job matching with heterogeneous firms and workers.
\newblock {\em Econometrica: Journal of the Econometric Society},  437--450.

\bibitem[\protect\citename{Demange and Gale, }1985]{dg1985}
Demange, Gabrielle, and Gale, David. 1985.
\newblock The Strategy Structure of Two-Sided Matching Markets.
\newblock {\em Econometrica}, {\bf 53}(4), 873--888.

\bibitem[\protect\citename{Demange {\em et~al.}, }1986]{dgs1986}
Demange, Gabrielle, Gale, David, and Sotomayor, Marilda. 1986.
\newblock Multi-item auctions.
\newblock {\em The Journal of Political Economy},  863--872.

\bibitem[\protect\citename{Gale and Shapley, }1962]{gs1962}
Gale, David, and Shapley, Lloyd~S. 1962.
\newblock College Admissions and the Stability of Marriage.
\newblock {\em The American Mathematical Monthly}, {\bf 69}(1), 9--15.

\bibitem[\protect\citename{Immorlica and Mahdian, }2005]{im2005}
Immorlica, Nicole, and Mahdian, Mohammad. 2005.
\newblock Marriage, honesty, and stability.
\newblock {\em pp.  53--62 in} {\em Proceedings of the sixteenth annual
  ACM-SIAM symposium on Discrete algorithms}.
\newblock Society for Industrial and Applied Mathematics.

\bibitem[\protect\citename{Karp, }1987]{karp1987}
Karp, Richard~M. 1987.
\newblock An upper bound on the expected cost of an optimal assignment.
\newblock {\em pp.  1--4 in} {\em Discrete algorithms and complexity:
  Proceedings of the Japan-US Joint Seminar},  vol. 15.
\newblock Academic Press, New York.

\bibitem[\protect\citename{Klaus and Payot, }2012]{kp2012}
Klaus, Bettina, and Payot, Fr{\'e}d{\'e}ric. 2012.
\newblock {\em Paths to stability in the assignment problem}.
\newblock mimeo.

\bibitem[\protect\citename{Knuth, }1976]{knuth1976}
Knuth, Donald~Ervin. 1976.
\newblock {\em Mariages stables et leurs relations avec d\&autres probl{\`e}mes
  combinatoires}.
\newblock Presses de l'Universit{\'e} de Montr{\'e}al.

\bibitem[\protect\citename{Kojima and Manea, }2010]{km2010}
Kojima, Fuhito, and Manea, Mihai. 2010.
\newblock Incentives in the probabilistic serial mechanism.
\newblock {\em Journal of Economic Theory}, {\bf 145}(1), 106--123.

\bibitem[\protect\citename{Kojima and Pathak, }2009]{kp2009}
Kojima, Fuhito, and Pathak, Parag~A. 2009.
\newblock Incentives and Stability in Large Two-Sided Matching Markets.
\newblock {\em American Economic Review}, {\bf 99}(3), 608--27.

\bibitem[\protect\citename{Kojima {\em et~al.}, }2013]{kpr2013}
Kojima, Fuhito, Pathak, Parag~A., and Roth, Alvin~E. 2013.
\newblock Matching with Couples: Stability and Incentives in Large Markets.
\newblock {\em The Quarterly Journal of Economics}, {\bf 128}(4), 1585--1632.

\bibitem[\protect\citename{Krokhmal and Pardalos, }2009]{kp2009assignment}
Krokhmal, Pavlo~A, and Pardalos, Panos~M. 2009.
\newblock Random assignment problems.
\newblock {\em European Journal of Operational Research}, {\bf 194}(1), 1--17.

\bibitem[\protect\citename{Lee, }2014]{lee2014}
Lee, SangMok. 2014.
\newblock {\em Incentive compatibility of large centralized matching markets}.
\newblock mimeo.

\bibitem[\protect\citename{Lee and Yariv, }2014]{ly2014}
Lee, SangMok, and Yariv, Leeat. 2014.
\newblock {\em On the Efficiency of Stable Matchings in Large Markets}.
\newblock mimeo.

\bibitem[\protect\citename{Manea, }2009]{manea2009}
Manea, Mihai. 2009.
\newblock Asymptotic ordinal inefficiency of random serial dictatorship.
\newblock {\em Theoretical Economics}, {\bf 4}(2), 165--197.

\bibitem[\protect\citename{Mo, }1988]{mo1988}
Mo, Jie{\-}Ping. 1988.
\newblock Entry and Structures of Interest Groups in Assignment Games.
\newblock {\em Journal of Economic Theory}, {\bf 46}(1), 66 -- 96.

\bibitem[\protect\citename{Mortensen, }2005]{mortensen2005}
Mortensen, Dale. 2005.
\newblock {\em Wage dispersion: why are similar workers paid differently}.
\newblock The MIT Press.

\bibitem[\protect\citename{Nax {\em et~al.}, }2013]{npy2013}
Nax, Heinrich~H, Pradelski, Bary~SR, and Young, H~Peyton. 2013.
\newblock {\em Decentralized dynamics to optimal and stable states in the
  assignment game}.
\newblock mimeo.

\bibitem[\protect\citename{N{\'u}{\~n}ez and Rafels, }2008]{nr2008}
N{\'u}{\~n}ez, Marina, and Rafels, Carles. 2008.
\newblock On the dimension of the core of the assignment game.
\newblock {\em Games and Economic Behavior}, {\bf 64}(1), 290--302.

\bibitem[\protect\citename{Pittel, }1989]{pittel1989}
Pittel, Boris. 1989.
\newblock The average number of stable matchings.
\newblock {\em SIAM Journal on Discrete Mathematics}, {\bf 2}(4), 530--549.

\bibitem[\protect\citename{Pittel, }1992]{pittel1992}
Pittel, Boris. 1992.
\newblock On likely solutions of a stable marriage problem.
\newblock {\em The Annals of Applied Probability},  358--401.

\bibitem[\protect\citename{Quint, }1987]{quint1987}
Quint, Thomas. 1987.
\newblock {\em Elongation of the Core in an Assignment Game}.
\newblock Tech. rept. DTIC Document.

\bibitem[\protect\citename{Schwarz and Yenmez, }2011]{sy2011}
Schwarz, Michael, and Yenmez, M~Bumin. 2011.
\newblock Median stable matching for markets with wages.
\newblock {\em Journal of Economic Theory}, {\bf 146}(2), 619--637.

\bibitem[\protect\citename{Shapley, }1955]{shapley1955}
Shapley, Lloyd~S. 1955.
\newblock {\em Markets as cooperative games}.
\newblock Rand Corporation.

\bibitem[\protect\citename{Shapley and Shubik, }1971]{ss1971}
Shapley, Lloyd~S, and Shubik, Martin. 1971.
\newblock The assignment game I: The core.
\newblock {\em International Journal of Game Theory}, {\bf 1}(1), 111--130.

\bibitem[\protect\citename{Walkup, }1979]{walkup1979}
Walkup, David~W. 1979.
\newblock On the expected value of a random assignment problem.
\newblock {\em SIAM Journal on Computing}, {\bf 8}(3), 440--442.

\bibitem[\protect\citename{Wilson, }1972]{wilson1972}
Wilson, LB. 1972.
\newblock An analysis of the stable marriage assignment algorithm.
\newblock {\em BIT Numerical Mathematics}, {\bf 12}(4), 569--575.

\end{thebibliography}

\end{document}